\newtheorem{theorem}{Theorem}
\newtheorem{lemma}[theorem]{Lemma}
\pgfplotsset{compat=1.18}
\begin{document}
\onehalfspace

\title{Largest common subgraph of two forests}
\author{Dieter Rautenbach \and Florian Werner}
\date{}

\maketitle
\vspace{-10mm}
\begin{center}
Institute of Optimization and Operations Research, Ulm University, Ulm, Germany\\
\texttt{$\{$dieter.rautenbach,florian.werner$\}$@uni-ulm.de}
\end{center}

\medskip

$\mbox{}$\hfill ``{\it My god, it's full of stars}''\\
$\mbox{}$\hfill {\it --- David Bowman, 2001}

\vspace{-3mm}

\begin{abstract}
A common subgraph of two graphs $G_1$ and $G_2$ is a graph 
that is isomorphic to subgraphs of $G_1$ and $G_2$. 
In the largest common subgraph problem the task is to determine
a common subgraph for two given graphs $G_1$ and $G_2$
that is of maximum possible size ${\rm lcs}(G_1,G_2)$. 
This natural problem generalizes the well-studied graph isomorphism problem,
has many applications, and remains NP-hard even restricted to unions of paths.
We present a simple $4$-approximation algorithm for forests, and, 
for every fixed $\epsilon\in (0,1)$,
we show that,
for two given forests $F_1$ and $F_2$ of order at most $n$,
one can determine in polynomial time
a common subgraph $F$ of $F_1$ and $F_2$
with at least ${\rm lcs}(F_1,F_2)-\epsilon n$ edges.
Restricted to instances with ${\rm lcs}(F_1,F_2)\geq cn$ for some fixed positive $c$,
this yields a polynomial time approximation scheme.
Our approach relies on the approximation of the given forests
by structurally simpler forests 
that are composed of copies of only $O(\log (n))$ different starlike rooted trees
and iterative quantizations of the options for the solutions.\\[3mm]
{\bf Keywords:} 
Largest common subgraph; 
graph isomorphism;
polynomial time approximation scheme
\end{abstract}

\section{Introduction}

We consider finite, simple, and undirected graphs, and use standard terminology.
A graph $H$ is a {\it subgraph} of a graph $G$ 
if $H$ arises from $G$ by removing vertices and edges
and a subgraph of $G$ is {\it spanning}
if it contains all vertices of $G$.
If $H$ is a subgraph of a graph $G$, we write $H\subseteq G$.
The order $n(G)$ and the size $m(G)$ of a graph $G$ 
are the numbers of its vertices and edges of $G$, respectively.
For a positive integer $k$, 
let $[k]$ be the set of positive integers at most $k$ and 
let $[k]_0=\{ 0\}\cup [k]$.

In their seminal list of NP-complete problems, 
Garey and Johnson \cite{gajo} mention the following decision problem as [GT49].

\medskip

\noindent {\sc Largest Common Subgraph}\\
\begin{tabular}{lp{14cm}}
Instance: & Two graphs $G$ and $H$, and a positive integer $K$.\\
Question: & Are there spanning subgraphs $G'$ of $G$
and $H'$ of $H$ that have at least $K$ edges and are isomorphic?
\end{tabular}

\medskip

Note that two graphs $G$ and $H$ are isomorphic 
if and only if 
$G$, $H$, and $K=m(G)$ form a yes-instance of {\sc Largest Common Subgraph},
that is, this problem generalizes the graph isomorphism problem.
It was first proposed by Bokhari \cite{bo}
within the context of array processing
and has various applications ranging from molecular chemistry \cite{rawi}
to pattern matching \cite{shbuve}.
Since the subgraphs $G'$ and $H'$ in the problem statement 
are obtained by removing edges only,
{\sc Largest Common Subgraph} makes sense only 
for graphs $G$ and $H$ of the same order.
Its NP-completeness follows easily from 
the NP-completeness of {\sc Clique}, which is [GT19] in \cite{gajo}.

In their comment to [GT49],
Garey and Johnson claim that {\sc Largest Common Subgraph}
can be solved in polynomial time if both $G$ and $H$ are trees,
for which they cite a private communication by Edmonds and Matula from 1975.
Grohe, Rattan, and Woeginger \cite{grrawo} show that this claim is false
by a reduction from {\sc 3-partition}, which is [SP15] in \cite{gajo}.
In fact, let $I$ be an instance of {\sc 3-partition}
that consists of $3m$ positive integers $a_1,\ldots,a_{3m}$
with $A/4<a_i<A/2$ for each $i\in [3m]$,
where $A=\frac{1}{m}(a_1+\cdots+a_{3m})$.
Recall that the question for $I$ is whether there is a partition of $[3m]$
into $m$ sets $I_1,\ldots,I_m$ each containing exactly three elements
such that $\sum_{j\in I_i}a_j=A$ for each $i\in [m]$.
Now, it is easy to see that $I$ is a yes-instance of {\sc 3-partition}
if and only if $G$, $H$, and $K$ form a yes-instance of {\sc Largest Common Subgraph},
where $G$ is the disjoint union of $3m$ paths of orders $a_1,\ldots,a_{3m}$,
$H$ is the disjoint union of $m$ paths each of order $A$, and
$K$ is the size of $G$.
Since {\sc 3-partition} is NP-complete in the strong sense \cite{gajo},
it follows that {\sc Largest Common Subgraph} remains NP-complete
when restricted to instances where $G$ and $H$ are unions of paths.
A simple modification yields the following.

\medskip

\noindent {\bf Theorem} (Grohe, Rattan, and Woeginger, Theorem 8 in \cite{grrawo})
{\sc Largest Common Subgraph} {\it remains NP-complete
when restricted to instances where $G$ and $H$ are both trees.}

\medskip

Possibly, Edmonds and Matula had a different problem in mind;
namely to determine the largest common subtree of two given trees.
In \cite{ak} Akutsu gives details for 
a simple efficient dynamic programming algorithm solving this problem 
using the maximum weight bipartite matching algorithm as a subroutine.
Many variations of {\sc Largest Common Subgraph} have been considered 
in view of their relevance for certain applications.
The variations involve 
the restriction to connected common subgraphs,
vertex/edge labels that have to be respected, and 
topological notions of subgraphs,
cf. \cite{akta,akha,gakn,krkumu,yaaoma} and references therein.

In the present paper, we consider approximation algorithms for 
the maximization version of {\sc Largest Common Subgraph} restricted to forests.
Let a graph $H$ be a {\it common subgraph} of two graphs $G_1$ and $G_2$
if $H$ is isomorphic to a subgraph of $G_1$ as well as to a subgraph of $G_2$. 
Let ${\rm lcs}(G_1,G_2)$ be the largest size $m(H)$ 
of a common subgraph $H$ of $G_1$ and $G_2$.
Note that we ignore the restriction to spanning subgraphs 
from the statement of {\sc Largest Common Subgraph}
as it is not essential.
In fact, if $H$ is a common subgraph of two graphs $G_1$ and $G_2$,
which are both of the same order $n$, 
then adding $n-n(H)$ isolated vertices to $H$
yields a graph $H'$ of the same size as $H$
that is isomorphic to spanning subgraphs of $G_1$ and $G_2$.

While {\sc Largest Common Subgraph} remains NP-complete 
when restricted to unions of paths, 
we show with Lemma \ref{lemma1} below 
that it can be solved efficiently when restricted to unions of stars.
This yields a simple $4$-factor approximation algorithm.

\begin{theorem}\label{theorem1}
For two given forests $F_1$ and $F_2$ of order $n$, 
one can determine 
a common subgraph $F$ of $F_1$ and $F_2$ with 
$$m(F)\geq \frac{1}{4}{\rm lcs}(F_1,F_2)$$
in time $n^{O(1)}$.
\end{theorem}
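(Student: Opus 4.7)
The plan is to reduce the problem to the star-forest case handled by Lemma~\ref{lemma1}, absorbing a factor of $4$ through two independent pigeonhole arguments (one per input forest).

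First I would establish that every forest $F$ admits a partition of its edge set into two star forests. Root each component of $F$ at an arbitrary vertex, orient every edge from parent to child, and colour an edge according to the parity of the depth of its parent endpoint. Within a single colour class every vertex either has only downward edges (if its depth parity matches) or only its unique upward edge, so every component is a star. Applying this to the inputs yields star forests $S_i^{(0)},S_i^{(1)}\subseteq F_i$ with $F_i=S_i^{(0)}\cup S_i^{(1)}$ for $i\in\{1,2\}$.

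The algorithm then invokes Lemma~\ref{lemma1} on each of the four pairs $(S_1^{(a)},S_2^{(b)})$ with $a,b\in\{0,1\}$ and outputs the largest common subgraph found. For the analysis, let $H$ be an optimal common subgraph of $F_1$ and $F_2$ with embeddings $\phi_i\colon H\hookrightarrow F_i$, and label every edge $e$ of $H$ by the pair $(a,b)\in\{0,1\}^2$ determined by $\phi_1(e)\in S_1^{(a)}$ and $\phi_2(e)\in S_2^{(b)}$. Some label $(a,b)$ receives at least $\mathrm{lcs}(F_1,F_2)/4$ edges; the subgraph $H_{ab}$ of $H$ they span maps via $\phi_1$ into the star forest $S_1^{(a)}$, hence is itself a star forest (since subgraphs of star forests are star forests), and symmetrically embeds into $S_2^{(b)}$ via $\phi_2$. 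Therefore $\mathrm{lcs}(S_1^{(a)},S_2^{(b)})\geq \mathrm{lcs}(F_1,F_2)/4$, which the corresponding call to Lemma~\ref{lemma1} recovers.

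The main obstacle is not anywhere in the above sketch but in Lemma~\ref{lemma1} itself: producing an exact polynomial-time algorithm for the largest common subgraph of two unions of stars. Once that is available, the star-forest decomposition and the two-fold pigeonhole argument immediately convert it into the claimed $4$-approximation, and the four invocations of Lemma~\ref{lemma1} plus the linear-time preprocessing clearly run in time $n^{O(1)}$.
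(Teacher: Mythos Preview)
Your proposal is correct and essentially identical to the paper's proof: both decompose each $F_i$ into two star forests via depth parity after rooting, apply Lemma~\ref{lemma1} to the four resulting pairs, and use the $2\times 2$ pigeonhole on an optimal common subgraph to certify the factor~$4$. The only cosmetic difference is that you phrase the analysis through explicit embeddings $\phi_1,\phi_2$ rather than the paper's ``by renaming vertices'' convention.
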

A natural next goal would be a polynomial time approximation scheme (PTAS)
for {\sc Largest Common Subgraph} restricted to forests.
Our second result yields a PTAS when restricted to instances $(F_1,F_2)$,
where $F_1$ and $F_2$ are forests of order $n$ and 
${\rm lcs}(F_1,F_2)\geq c n$ for some fixed positive $c$.

\begin{theorem}\label{theorem2}
For every $\epsilon\in (0,1)$,
there is some $k\in \mathbb{N}$ with the following property:
For two given forests $F_1$ and $F_2$ of order $n$, 
one can determine 
a common subgraph $F$ of $F_1$ and $F_2$ with 
$$m(F)\geq {\rm lcs}(F_1,F_2)-\epsilon n$$
in time $O(n^k)$.
\end{theorem}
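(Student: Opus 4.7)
The plan is to follow the two-phase strategy signaled in the abstract. In the first phase, for each input forest $F_i$ I would compute in polynomial time a structurally restricted subforest $F_i^*\subseteq F_i$ with $m(F_i^*)\geq m(F_i)-\frac{\epsilon}{3}n$, where $F_i^*$ is a disjoint union of copies drawn from a family $\mathcal{T}$ of only $O(\log n)$ different starlike rooted trees. In the second phase, I would compute a common subgraph of $F_1^*$ and $F_2^*$ of size within $\frac{\epsilon}{3}n$ of ${\rm lcs}(F_1^*,F_2^*)$ via dynamic programming, using Lemma \ref{lemma1} (and its refinement to starlike types) for each pair of types and then solving a transportation-type problem across the copies; iterative quantization of the assignment weights keeps the state space polynomial.

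For the structural phase I would first root each tree of $F_i$ appropriately, and then identify a set of at most $\frac{\epsilon}{6}n$ ``cut'' edges whose removal leaves a forest whose components are rooted trees of a simple starlike form, for example a root with attached short branches of bounded depth. The rooted isomorphism types of the resulting components naturally partition into $O(\log n)$ classes after geometric quantization of branch sizes by a factor $1+\Theta(\epsilon)$; within each class one further rounds internal multiplicities to collapse onto a single representative type, losing at most another $\frac{\epsilon}{6}n$ edges. A critical subclaim here is that the same decomposition does not destroy too many edges of an \emph{optimal} common subgraph of $F_1$ and $F_2$, giving ${\rm lcs}(F_1^*,F_2^*)\geq {\rm lcs}(F_1,F_2)-\frac{\epsilon}{3}n$.

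For the algorithmic phase, since $|\mathcal{T}|=O(\log n)$, each $F_i^*$ is fully described by a vector $(m_\tau^i)_{\tau\in\mathcal{T}}$ of multiplicities. For every pair $(\tau,\sigma)\in\mathcal{T}^2$, the quantity ${\rm lcs}(\tau,\sigma)$ can be computed in polynomial time (adapting Akutsu's tree-matching dynamic program \cite{ak} to the starlike structure, or by the reasoning behind Lemma \ref{lemma1}). Matching copies one-to-one then reduces to maximizing $\sum_{\tau,\sigma}x_{\tau\sigma}\,{\rm lcs}(\tau,\sigma)$ over nonnegative integers $x_{\tau\sigma}$ with $\sum_\sigma x_{\tau\sigma}\leq m_\tau^1$ and $\sum_\tau x_{\tau\sigma}\leq m_\sigma^2$, a transportation problem solvable in polynomial time. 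Since an optimal common subgraph may use partial copies---exploiting embeddings of a starlike tree of one type as a subgraph of another---one iterates this phase on the residual types, quantizing the residual sizes into $O((\log n)/\epsilon)$ geometric bins, approaching ${\rm lcs}(F_1^*,F_2^*)$ within $\frac{\epsilon}{3}n$.

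The main obstacle I anticipate is the subclaim in the structural phase: the cuts in $F_1$ and $F_2$ must simultaneously respect some unknown optimal common subgraph $F^{\mathrm{opt}}$, destroying only $O(\epsilon n)$ of its edges. One-sided guarantees of the form ``at most $\epsilon n$ edges of $F_i$ are cut'' are easy to ensure, but an adversarial $F^{\mathrm{opt}}$ could \emph{a priori} sit on precisely the cut edges on either side. Overcoming this will likely require a canonical choice of cuts---for instance, at vertices whose subtree sizes lie in prescribed geometric intervals---combined with a pigeonhole or averaging argument over $O(1/\epsilon)$ shifted thresholds, so that for \emph{any} common subgraph a random shift cuts only an $O(\epsilon)$-fraction of its edges on each side, and hence at most $O(\epsilon n)$ in total in expectation.
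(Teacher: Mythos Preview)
Your structural phase parallels the paper's (Lemmas~\ref{lemma2} and~\ref{lemma4}), but the ``main obstacle'' you flag is a non-issue: if $F_i^*=F_i-E_i$ with $|E_i|\le\frac{\epsilon}{6}n$, then deleting from any common subgraph $F$ of $F_1,F_2$ the at most $|E_1|+|E_2|$ edges that correspond (under the two embeddings) to removed edges already yields a common subgraph of $F_1^*,F_2^*$. Hence ${\rm lcs}(F_1^*,F_2^*)\ge{\rm lcs}(F_1,F_2)-|E_1|-|E_2|$ with no averaging or shifted-threshold argument needed; the paper argues exactly this way.

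The genuine gap is your algorithmic phase. The transportation objective $\sum_{\tau,\sigma}x_{\tau\sigma}\,{\rm lcs}(\tau,\sigma)$ does not model ${\rm lcs}(F_1^*,F_2^*)$: when the root of a $\tau$-copy in $F_1^*$ is identified with the root of a $\sigma$-copy in $F_2^*$, the branches of $\tau$ and $\sigma$ not covered by the chosen common rooted subtree remain available and may be matched against branches coming from \emph{other} components on the opposite side, and it can be globally better to take a sub-optimal overlay at some pairing in exchange for more useful residuals. Your ``iterate on residual types'' does not close this, since which residuals appear depends on the overlay chosen at every pairing, the residuals are bounded-order forests rather than members of $\mathcal{T}$, and you give no control on the number of rounds or the loss per round. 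The paper proceeds differently: it first restricts (at $O(\epsilon n)$ cost) to solutions in which every high-degree root is either isolated or paired with a root of comparable degree, so each of the $O(\log n)$ types has only $O(1)$ admissible partner types; for each admissible pair it enumerates a \emph{bounded} set of overlay patterns $(j_1,j_2,A_0,A_1,A_2)$ recording the common subtree $A_0$ \emph{together with} the residual forests $A_1,A_2$, and quantizes all multiplicities to obtain only $n^{O(1)}$ global configurations. For each configuration the overlaid parts are stripped out; what remains on both sides is a pair of forests whose components all have order bounded in terms of $\epsilon$, on which ${\rm lcs}$ is computed \emph{exactly} by the dynamic program of Lemma~\ref{lemma3}. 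The residual matching is thus handled in a single exact step, not by iterating a transportation LP. Your proposal is missing both the explicit overlay-with-residual enumeration and this reduction of the leftovers to a bounded-component exact instance.
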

Our approach for Theorem \ref{theorem2} is as follows.
Firstly, removing a small fraction of all edges, 
we approximate the given forests $F_1$ and $F_2$ 
by simpler forests that are composed of copies of $O(\log n)$ 
different trees that are structurally close to stars.
In particular, each component $K$ of the approximating forests 
has a root $r$ of controlled degree and the components of $K-r$
are of bounded order.
Secondly, we show how to solve the largest common subgraph problem
approximately on such simpler instances.
For this approximate solution,
we reduce the number of options 
that need to be considered at several stages 
during our algorithm
by suitable quantization.

The following two sections contain proofs of our results and auxiliary statements.

\section{Proof of Theorem \ref{theorem1}}

In this section, we show Theorem \ref{theorem1}.

Our first lemma implies that, 
for two given unions $F_1$ and $F_2$ of stars,
a common subgraph $F$ of $F_1$ and $F_2$
of maximum size ${\rm lcs}(F_1,F_2)$
can be found efficiently.
The key observation is the following simple inequality:
For every $a,a',b,b'\in \mathbb{N}_0$ with $a<a'$ and $b<b'$, we have
\begin{eqnarray}\label{e1}
\min\{ a,b'\}+\min\{ a',b\}\leq \min\{ a,b\}+\min\{ a',b'\}.
\end{eqnarray}
(\ref{e1}) follows easily by considering all possible non-decreasing orderings of $a,b,a',b'$.

\begin{lemma}\label{lemma1}
Let $a_1,\ldots,a_\ell$ and $b_1,\ldots,b_\ell$ 
be two non-decreasing sequences of non-negative integers.
If $F_1$ is the disjoint union of $\ell$ stars of orders $a_1+1,\ldots,a_\ell+1$, and 
$F_2$ is the disjoint union of stars of orders $b_1+1,\ldots,b_\ell+1$, then
$${\rm lcs}(F_1,F_2)=\sum\limits_{i=1}^\ell\min\{ a_i,b_i\}.$$
\end{lemma}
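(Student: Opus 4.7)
My plan is to reduce the computation of ${\rm lcs}(F_1,F_2)$ to an assignment problem on the indices $[\ell]$, and then to use an exchange argument based on inequality~(\ref{e1}) to show that pairing the $i$-th star of $F_1$ with the $i$-th star of $F_2$ is optimal.

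First I would observe that any subgraph of a star is a disjoint union of a star and isolated vertices, so every common subgraph $F$ of $F_1$ and $F_2$ is itself a disjoint union of stars. Given embeddings $\varphi_1\colon F\hookrightarrow F_1$ and $\varphi_2\colon F\hookrightarrow F_2$, if two star components of $F$ have their images under $\varphi_1$ (or under $\varphi_2$) in a common component of $F_1$ (or of $F_2$), merging them into a single star yields another common subgraph of the same size. Iterating this merging, I may assume $\varphi_1$ and $\varphi_2$ are both injective on star components. Padding with trivial stars if necessary, such an $F$ determines a permutation $\sigma$ of $[\ell]$ with
$$m(F)\le \sum_{i=1}^\ell\min\{a_i,b_{\sigma(i)}\},$$
since a star embedded into the $i$-th star of $F_1$ and the $\sigma(i)$-th star of $F_2$ has at most $\min\{a_i,b_{\sigma(i)}\}$ edges. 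Conversely, the disjoint union of stars of orders $\min\{a_i,b_i\}+1$ for $i\in[\ell]$ is a common subgraph of $F_1$ and $F_2$ with exactly $\sum_{i=1}^\ell\min\{a_i,b_i\}$ edges.

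It therefore suffices to prove that $\max_\sigma\sum_{i=1}^\ell\min\{a_i,b_{\sigma(i)}\}=\sum_{i=1}^\ell\min\{a_i,b_i\}$, where the maximum ranges over permutations $\sigma$ of $[\ell]$. I would argue by an exchange step: if $\sigma$ has an inversion $i<j$ with $\sigma(i)>\sigma(j)$, apply~(\ref{e1}) with $a=a_i$, $a'=a_j$, $b=b_{\sigma(j)}$, $b'=b_{\sigma(i)}$ (the hypotheses $a<a'$ and $b<b'$ of~(\ref{e1}) can be weakened to $a\le a'$ and $b\le b'$, because in case of equality both sides of the inequality agree). Swapping the values $\sigma(i)$ and $\sigma(j)$ does not decrease the objective and strictly reduces the number of inversions, so iterating terminates with $\sigma$ equal to the identity, whose objective value is $\sum_{i=1}^\ell\min\{a_i,b_i\}$.

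The main step I expect to require care is the reduction of an arbitrary common subgraph to an assignment via the component-injectivity of $\varphi_1$ and $\varphi_2$; once this is in place, the rest is routine book-keeping around inequality~(\ref{e1}).
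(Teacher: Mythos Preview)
Your proposal is correct and follows essentially the same route as the paper: reduce to maximizing $\sum_i \min\{a_i,b_{\sigma(i)}\}$ over permutations $\sigma$ of $[\ell]$, then invoke inequality~(\ref{e1}) via an exchange argument. The merging step you flag as delicate is in fact unnecessary, since every edge of a star is incident to its center and hence two components of $F$ each having at least one edge cannot embed disjointly into the same star; component-injectivity is therefore automatic once isolated vertices are discarded.
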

\begin{proof}
Let $F$ be a common subgraph of $F_1$ and $F_2$ with $m(F)={\rm lcs}(F_1,F_2)$.
By renaming vertices, we may assume $F\subseteq F_1,F_2$.
Let $S_1,\ldots,S_\ell$ be the components of $F_1$,
where $S_i$ has order $a_i+1$, and 
let $T_1,\ldots,T_\ell$ be the components of $F_2$,
where $T_j$ has order $b_j+1$.
Let $H$ be the bipartite graph 
with the two partite sets $\{ S_1,\ldots,S_\ell\}$ and $\{ T_1,\ldots,T_\ell\}$,
where $S_i$ is adjacent to $T_j$ if and only if
some edge of $F$ belongs to $S_i$ as well as $T_j$.
Since all considered components are stars, 
the edges of $H$ form a matching $M$ in $H$.
Now, if $S_iT_j\in M$, then the choice of $F$ implies that 
$\min\{ a_i,b_j\}$ edges of $S_i$ and $T_j$ belong to $F$,
that is, $m(F)=\sum\limits_{S_iT_j\in M}\min\{ a_i,b_j\}$.
In view of this formula, we may assume that $M$ is a perfect matching of $H$.
In other words, there is a permutation $\pi$ of $[\ell]$ such that 
$m(F)=\sum\limits_{i=1}^\ell\min\{ a_i,b_{\pi(i)}\}$.
Now, (\ref{e1}) implies that choosing the permutation $\pi$ as the identity 
maximizes $\sum\limits_{i=1}^\ell\min\{ a_i,b_{\pi(i)}\}$,
which completes the proof.
\end{proof}
Now, Theorem \ref{theorem1} follows easily 
by decomposing the given forests into unions of stars.
Note that Lemma \ref{lemma1} implicitly assumes 
that $F_1$ and $F_2$ have equally many components, 
which can easily be ensured by adding isolated vertices. 

\begin{proof}[Proof of Theorem \ref{theorem1}]
Let $F$ be a common subgraph of $F_1$ and $F_2$ with $m(F)={\rm lcs}(F_1,F_2)$.
By renaming vertices, we may assume $F\subseteq F_1,F_2$.
For $i\in [2]$, let the set $R_i$ contain exactly one vertex from every component of $F_i$.
Recall that the distance of an edge $e$ from $R_i$ in $F_i$ 
is the minimum length of a path in $F_i$ intersecting both $e$ and $R_i$.
For $i\in [2]$, let $F_i^{\rm even}$ be the spanning subgraph of $F_i$
containing all edges of $F_i$ that have even distance to $R_i$,
and let $F_i^{\rm odd}=F_i-E\left(F_i^{\rm even}\right)$.
By construction, all components of 
$F_1^{\rm even}$,
$F_1^{\rm odd}$,
$F_2^{\rm even}$, and
$F_2^{\rm odd}$ are stars.
Furthermore, 
one of the four sets
$E\left(F_1^{\rm even}\right)\cap E\left(F_2^{\rm even}\right)$,
$E\left(F_1^{\rm even}\right)\cap E\left(F_2^{\rm odd}\right)$,
$E\left(F_1^{\rm odd}\right)\cap E\left(F_2^{\rm even}\right)$, and
$E\left(F_1^{\rm odd}\right)\cap E\left(F_2^{\rm odd}\right)$
contains at least $1/4$ of the edges of $F$.
Hence, efficiently determining four common subgraphs 
of maximum sizes for the pairs
$(F_1^{\rm even},F_2^{\rm even})$,
$(F_1^{\rm even},F_2^{\rm odd})$,
$(F_1^{\rm odd},F_2^{\rm even})$, and 
$(F_1^{\rm odd},F_2^{\rm odd})$
using Lemma \ref{lemma1},
and returning the one with most edges,
yields a common subgraph of $F_1$ and $F_2$ 
with at least $\frac{1}{4}{\rm lcs}(F_1,F_2)$ edges.
\end{proof}

\section{Proof of Theorem \ref{theorem2}}

In this section, we show Theorem \ref{theorem2}.

As one ingredient of the proof we need that a largest common subgraph
of two given forests with components of bounded orders
can be found efficiently 
by a straightforward dynamic programming approach;
the next lemma gives details.
For a positive integer $\Delta$,
let ${\cal F}_{\Delta}$
be the collection of all forests 
whose components have orders at most $\Delta$.

\begin{lemma}\label{lemma3}
For every $\Delta\in\mathbb{N}$, 
there is some $k\in\mathbb{N}$ with the following property:
For two given forests $F_1$ and $F_2$ of orders at most $n$ from ${\cal F}_{\Delta}$, 
one can determine a common subgraph $F$ of $F_1$ and $F_2$ 
with $m(F)={\rm lcs}(F_1,F_2)$
in time $O(n^k)$.
\end{lemma}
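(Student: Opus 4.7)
The plan is to reduce the problem to optimizing a linear function over integer vectors indexed by isomorphism types of small trees, and to solve that optimization by knapsack-style dynamic programming. Fix a list $T_1,\ldots,T_N$ of pairwise non-isomorphic trees on at most $\Delta$ vertices; the number $N=N(\Delta)$ depends only on $\Delta$. Any common subgraph $F$ of $F_1$ and $F_2$ is itself a forest, and since each of its components is connected and hence lies inside a single component of $F_1$, all components of $F$ have order at most $\Delta$. Encode $F$ by its \emph{signature} $x=(x_1,\ldots,x_N)\in[n]_0^N$, where $x_j$ is the number of components of $F$ isomorphic to $T_j$, so that $m(F)=\sum_{j=1}^{N}(|V(T_j)|-1)x_j$. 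Maximizing $m(F)$ then amounts to finding a signature that is simultaneously realizable as a subgraph of $F_1$ and as a subgraph of $F_2$ and that maximizes this linear objective.

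For each component $C$ of $F_i$ with $i\in[2]$, I enumerate all subsets of $E(C)$; since $|E(C)|\le\Delta-1$ there are at most $2^{\Delta-1}$ of them, and the isomorphism type of every component of the resulting subforest can be identified in $O_\Delta(1)$ time. This yields a set $\mathcal S(C)\subseteq[\Delta]_0^N$ of size $O_\Delta(1)$ whose elements are precisely the signatures realizable as subgraphs of $C$; note that $\mathcal S(C)$ is closed under coordinatewise decrease, since one may always drop components. The signatures realizable in $F_i$ globally are
$$\mathcal P(F_i)=\Bigl\{\sum_C s_C : s_C\in\mathcal S(C)\text{ for every component }C\text{ of }F_i\Bigr\}\subseteq[n]_0^N,$$
and I compute $\mathcal P(F_i)$ by sweeping through the components of $F_i$ one at a time, maintaining a lookup table with backpointers of the signatures attainable so far and updating it at each new component $C$ by branching on the $O_\Delta(1)$ choices in $\mathcal S(C)$. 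Since $|[n]_0^N|=(n+1)^N=n^{O_\Delta(1)}$ and there are at most $n$ components, this step runs in time $n^{O_\Delta(1)}$.

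Finally, I enumerate the signatures $x\in\mathcal P(F_1)$, test each for membership in $\mathcal P(F_2)$, and retain the $x^\star$ maximizing $\sum_{j=1}^{N}(|V(T_j)|-1)x_j$; the forest specified by $x^\star$ is then a largest common subgraph of $F_1$ and $F_2$, and backtracking through the two dynamic programs produces explicit vertex-disjoint embeddings into $F_1$ and $F_2$. Conceptually no single step is difficult, since the whole argument is a signature-based dynamic program; the only thing to verify carefully is that each of the quantities $N$, $|\mathcal S(C)|$, the signature space $[n]_0^N$, and the component count stays within a polynomial in $n$ with exponent depending only on $\Delta$. The real substance of the section lies elsewhere, namely in the structural reduction that produces forests satisfying the bounded-component-size hypothesis of this lemma in the first place.
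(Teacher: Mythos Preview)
Your proof is correct and is essentially the same as the paper's: both encode forests in $\mathcal{F}_\Delta$ by their component-multiplicity vectors over the $O_\Delta(1)$ isomorphism types of trees of order at most $\Delta$, build the set of realizable vectors for each $F_i$ by a component-by-component dynamic program inside $[n]_0^N$, intersect, and maximize the linear edge-count objective. One small quibble: enumerating edge subsets of $C$ produces only the signatures of \emph{spanning} subforests, which are not literally closed under coordinatewise decrease as you assert; this is harmless for the optimization (isolated vertices contribute nothing, and one may pad $F_1,F_2$ to equal order), but you should phrase it that way rather than claim downward closure.
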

\begin{proof}
Let $\{ T_1,\ldots,T_p\}$ be the set of all (unrooted) trees of order at most $\Delta$,
in particular, $p$ is bounded in terms of $\Delta$.
For every forest $F$ of order $n$ from ${\cal F}_{\Delta}$,
there is a unique $k$-tupel $t(F)=(t_1,\ldots,t_p)\in [n]_0^p$ 
such that $F$ is isomorphic to the disjoint union of $t_i$ copies of $T_i$ for $i\in [p]$,
that is, $F\simeq\bigcup\limits_{i=1}^p t_iT_i$.
For a forest $F$ of order $n$ from ${\cal F}_{\Delta}$, 
note that every spanning subforest $F'$ of $F$ also belongs to ${\cal F}_{\Delta}$
and let 
$${\cal T}(F)
=\left\{
t(F'): \mbox{ $F'$ is a spanning subforest of $F$}
\right\}\subseteq [n]_0^p.$$
Now, let $F$ be some fixed forest of order at most $n$ from ${\cal F}_{\Delta}$.
Let $K_1,\ldots,K_\ell$ be the components of $F$.
For $i\in [\ell]$, let $n_i$ be the order of $K_i$ and 
let $F_{[i]}=K_1\cup\cdots\cup K_i$.
Since,
\begin{eqnarray*}
|{\cal T}(K_{i+1})| & \leq & (n_{i+1}+1)^p\leq (\Delta+1)^p,\\
|{\cal T}(F_{[i]})| & \leq & \left(1+\sum\limits_{j=1}^in_j\right)^p\leq (n+1)^p,\\
F_{[i+1]}&=&F_{[i]}\cup K_{i+1},\mbox{ and, hence, }\\
{\cal T}(F_{[i+1]}) &=& \left\{ t'+t'':t'\in {\cal T}(F_{[i]})\mbox{ and }t''\in {\cal T}(K_{i+1})\right\},
\end{eqnarray*}
a simple dynamic programming procedure allows to determine 
in time $O(n^k)$, 
for some $k$ depending only on $\Delta$,
the set ${\cal T}(F)$ and 
$${\rm lcs}(F_1,F_2)
=\max\left\{\sum\limits_{i=1}^pt_im(T_i):(t_1,\ldots,t_p)\in {\cal T}(F_1)\cap {\cal T}(F_2)\right\}.$$
Along the dynamic programming, 
one can also maintain suitable realizers
and the desired statement follows.
\end{proof}
As explained after Theorem \ref{theorem2},
we approximate the two given forests 
by simpler forests that are composed of copies of few
different trees that are structurally close to stars.
The following two lemmas contain the details.

Let $\epsilon>0$ and let $\Delta$ be a positive integer.

Let 
\begin{eqnarray}\label{e2}
\{ T_1,\ldots,T_p\}
\end{eqnarray}
be the set of all rooted trees of order at most $\Delta$,
where $T_p$ is the rooted tree of order $1$.
It is well-known 
that the number of rooted non-isomorphic trees of order $n+1$ 
is at most 
the {\it $n$-th Catalan number} $C_n=\frac{1}{n+1}{2n\choose n}$ \cite{st}.
Since the Catalan numbers are non-decreasing, 
$$p\leq \sum\limits_{i=1}^\Delta C_{i-1}<\Delta C_{\Delta}<{2\Delta\choose \Delta}.$$
Let 
\begin{eqnarray}\label{e3}
D(\epsilon,\Delta)=
[\Delta]_0\cup \left\{ \left\lceil(1+\epsilon)^i\right\rceil:i\in \mathbb{N}_0\right\}.
\end{eqnarray}
We say that a forest $F$ is {\it $(\epsilon,\Delta)$-clean} 
if each component $K$ of $F$ has a root vertex $r_K$ such that 
\begin{enumerate}[(i)]
\item every component of $K-r_K$ has order at most $\Delta$,
\item the degree $d_F(r_K)$ of $r_K$ in $F$ belongs to $D(\epsilon,\Delta)$, and
\item for every rooted tree $T$ in $\{ T_1,\ldots,T_{p-1}\}$,
that is, the order of $T$ is at least $2$,
the number of components $L$ of $K-r_K$, 
considered as trees rooted in the neighbor of $r_K$ in $V(L)$,
that are isomorphic to $T$ as a rooted tree is a multiple of
\begin{eqnarray}\label{e8b} 
\max\left\{1,\left\lfloor\frac{\epsilon d_F(r_K)}{\Delta{2\Delta\choose \Delta}}\right\rfloor\right\}.
\end{eqnarray}
\end{enumerate}
See Figure \ref{fig1} for an illustration.

\begin{figure}[H]
    \centering
    \begin{tikzpicture} [scale=0.3]
    \tikzstyle{point}=[draw,circle,inner sep=0.cm, minimum size=1mm, fill=black]
    \tikzstyle{point2}=[draw,circle,inner sep=0.cm, minimum size=0.5mm, fill=black]
    
    \draw [rounded corners, rounded corners=5mm] (-1.12,-2)--(5,8.6)--(11.12,-2)--cycle;    
    
    \node[point] (v) at (5,7) [label=below:] {};
    \node[point] (u) at (20,13) [label=above right:\text{$r_K$ root of degree in $D(\epsilon, \Delta)$}] {};
    \draw (v) --(u); 
    \node[align=center] at (5,2) {rooted\\subtree\\of order\\at most $\Delta$};

    \foreach \i in {-1,0,1}{
        \node[point2] at (0.5*\i+8,7) [label=left:] {};
    }

    \foreach \i in {0,1,2,3}{
        \node[point] (a) at (2*\i+11,7) [label=left:] {};
    \draw (a) -- (u);
    }
    \draw[thick,black,decorate,decoration={brace,amplitude=5}] (17.5,6) -- (10.5,6) node[midway, below,yshift=0]{};    
    \node[align=left] at (14,3) {4 copies \\ of $T_p=$};
    \node[point] at (16.5,2.3) [label=below:] {};

    \foreach \i in {-1,0,1}{
        \node[point2] at (0.5*\i+20,7) [label=left:] {};
    }

    \begin{scope}[shift={(12,0)}]   
    \foreach \i in {0,2}{
        \node[point] (a) at (2*\i+11,7) [label=left:] {};
        \node[point] (b) at (2*\i+11+2*0.5,7-2*0.866) [label=left:] {};
        \node[point] (c) at (2*\i+11-2*0.5,7-2*0.866) [label=left:] {};
        \node[point] (d) at (2*\i+11+4*0.5,7-4*0.866) [label=left:] {};
        \node[point] (e) at (2*\i+11,7-4*0.866) [label=left:] {};
        \draw (a) -- (d);
        \draw (a) -- (c);
        \draw (e) -- (b);
      \draw (a) -- (u);
    }
    \draw[thick,black,decorate,decoration={brace,amplitude=5}] (17.5,7-4*0.866-1) -- (10.5,7-4*0.866-1) node[midway, below,yshift=0]{};    
    \node[align=left] at (14,7-4*0.866-4) {2 copies \\ of};
    
    \begin{scope}[shift={(3.5,-4*0.866-5)}]   
        \foreach \i in {0}{
        \node[point] (a) at (2*\i+11,7) [label=left:] {};
        \node[point] (b) at (2*\i+11+2*0.5,7-2*0.866) [label=left:] {};
        \node[point] (c) at (2*\i+11-2*0.5,7-2*0.866) [label=left:] {};
        \node[point] (d) at (2*\i+11+4*0.5,7-4*0.866) [label=left:] {};
        \node[point] (e) at (2*\i+11,7-4*0.866) [label=left:] {};
        \draw (a) -- (d);
        \draw (a) -- (c);
        \draw (e) -- (b);
        }
    \end{scope}

    \foreach \i in {-1,0,1}{
        \node[point2] at (0.5*\i+18,7) [label=left:] {};
    }

    \end{scope}

    \begin{scope}[shift={(23,0)}]   
        \foreach \i in {0,2,4}{
        \node[point] (a) at (2*\i+11,7) [label=left:] {};
        \node[point] (b) at (2*\i+11,7-2*0.866) [label=left:] {};
        \node[point] (c) at (2*\i+11,7-4*0.866) [label=left:] {};
        \node[point] (d) at (2*\i+11-1,7-6*0.866) [label=left:] {};
        \node[point] (e) at (2*\i+11+1,7-6*0.866) [label=left:] {};
        \draw (a) -- (c);
        \draw (d) -- (c) -- (e);
        \draw (a) -- (u);
    }
    \draw[thick,black,decorate,decoration={brace,amplitude=5}] (20.5,7-6*0.866-1) -- (9.5,7-6*0.866-1) node[midway, below,yshift=0]{};    
    \node[align=left] at (15,7-6*0.866-4) {3 copies \\ of};
    \begin{scope}[shift={(4.5,-6*0.866-5)}]   
        \foreach \i in {0}{
        \node[point] (a) at (2*\i+11,7) [label=left:] {};
        \node[point] (b) at (2*\i+11,7-2*0.866) [label=left:] {};
        \node[point] (c) at (2*\i+11,7-4*0.866) [label=left:] {};
        \node[point] (d) at (2*\i+11-1,7-6*0.866) [label=left:] {};
        \node[point] (e) at (2*\i+11+1,7-6*0.866) [label=left:] {};
        \draw (a) -- (c);
        \draw (d) -- (c) -- (e);
        }
    \end{scope}

    \end{scope}

\end{tikzpicture}

\caption{A component $K$ with root $r_K$ of an $ (\epsilon,\Delta)$-clean forest $F$.
Note that we consider the components $L$ of $K-r_K$ as trees rooted 
in the neighbor of $r_K$ in $V(L)$, which means that we distinguish
isomorphic components of $K-r_K$ 
that are attached differently to the root $r_K$.
The figure shows five isomorphic components of $K-r_K$,
two of which are attached to $r_k$ at their unique vertex of degree $2$
and three are attached to $r_k$ differently.}\label{fig1}
\end{figure}
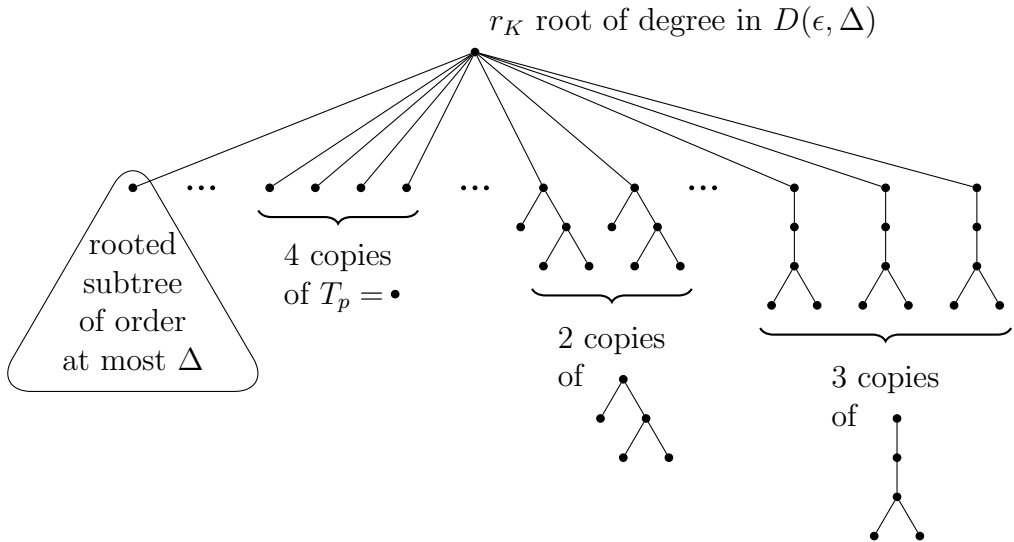

\begin{lemma}\label{lemma2}
For every $\epsilon\in (0,1)$ and $\Delta\in \mathbb{N}$ with $\epsilon \Delta\geq 1$,
there is some $k\in \mathbb{N}$ with the following property:
For a given forest $F$ of order $n$,
one can determine a spanning $(\epsilon,\Delta)$-clean subforest $F'$ of $F$ with 
$$m(F')\geq \left(1-4\left(\epsilon+\frac{1}{\Delta}\right)\right)m(F)$$
in time $O(n^k)$.
\end{lemma}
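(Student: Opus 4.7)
The plan is to produce $F'$ in two stages: first, cut a small number of edges so that the resulting forest satisfies (i); second, for each component of that intermediate forest, prune some subtrees at its chosen root so that (ii) and (iii) hold. Each pruned subtree has order at most $\Delta$ and becomes a new component that is automatically $(\epsilon,\Delta)$-clean, since any root in it has degree in $[\Delta]_0\subseteq D(\epsilon,\Delta)$ and the quantizer in (iii) equals $1$ (because $\epsilon<1\le \binom{2\Delta}{\Delta}$).

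For Stage~1 I root each tree of $F$ arbitrarily and iterate: while the current rooted tree contains a vertex $v$ whose rooted subtree has more than $\Delta$ vertices, pick such a $v$ of maximum depth, cut the edge from $v$ to its parent (if any), and declare $v$ the root of a new component. By depth-maximality every subtree hanging at a child of $v$ has order at most $\Delta$, so the piece at $v$ satisfies (i). Each cut removes a subtree of at least $\Delta+1$ vertices from the current tree, and only components of order greater than $\Delta$ contribute, so Stage~1 discards at most $m(F)/\Delta$ edges in total.

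For Stage~2 I treat each component $K$ of the intermediate forest independently, with designated root $r_K$ of degree $d_K$. If $d_K\le\Delta$ then $K$ is already $(\epsilon,\Delta)$-clean. Otherwise I pick $d^*_K$ to be the largest element of $D(\epsilon,\Delta)$ with $d^*_K\le d_K(1-\epsilon/\Delta)$, set $q^*=\max\{1,\lfloor\epsilon d^*_K/(\Delta\binom{2\Delta}{\Delta})\rfloor\}$, and modify $K$ in two substeps: (a) for every non-trivial rooted subtree type $T_i$ with $i<p$, detach $n_i\bmod q^*$ of the $T_i$-subtrees attached at $r_K$ by cutting the respective edges to $r_K$, so that the remaining count becomes a multiple of $q^*$; (b) detach pendant neighbors of $r_K$ (type $T_p$, which are unconstrained by (iii)) until the degree of $r_K$ equals $d^*_K$. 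Using $p<\binom{2\Delta}{\Delta}$, substep (a) removes fewer than $\epsilon d^*_K/\Delta$ subtrees, so the degree after (a) is still at least $d_K(1-\epsilon/\Delta)\ge d^*_K$. The modified $K$ then satisfies (i) (only subtrees were removed), (ii) ($d^*_K\in D(\epsilon,\Delta)$), and (iii) ((b) only touches pendants and therefore preserves all non-trivial counts).

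Summing $d_K-d^*_K\le(\epsilon+\epsilon/\Delta)d_K+O(1)$ over all Stage~2 components and using $\sum_K d_K\le m(F)$, together with the Stage~1 estimate, gives a total edge loss below $4(\epsilon+1/\Delta)m(F)$. The running time is polynomial: $D(\epsilon,\Delta)\cap[0,d_K]$ has polynomial size, rooted-isomorphism of subtrees of size at most $\Delta$ is computable in constant time, and the algorithm is only a few passes over $F$. I expect the main obstacle to be the circular dependence between (ii) and (iii) --- the quantizer $q^*$ in (iii) is a function of the final degree $d^*_K$, which must itself lie in $D(\epsilon,\Delta)$. My plan to break this is precisely the multiplicative slack $d^*_K\le d_K(1-\epsilon/\Delta)$ together with the fact that pendants are unconstrained by (iii). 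In the edge case where $r_K$ has fewer pendants than needed to hit $d^*_K$ exactly, the fallback is to step $d^*_K$ down to the next element of $D(\epsilon,\Delta)$ and detach an additional block of $q^*$ non-trivial subtrees of some type, the multiplicative density $1+\epsilon$ of $D(\epsilon,\Delta)$ bounding the extra loss.
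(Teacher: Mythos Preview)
Your Stage~1 is sound and in fact subsumes both of the paper's first two cleaning passes: the paper first cuts fewer than $2m/\Delta$ edges so that each component has at most one vertex of degree above $\Delta$, and then cuts at most $m/\Delta$ more to enforce~(i); your depth-maximal cutting achieves~(i) directly, and the same argument shows that in each resulting piece only the designated root can have degree exceeding $\Delta$. The bound of $m(F)/\Delta$ removed edges is correct.

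Stage~2 has a genuine gap, precisely at the circular dependence you anticipated, and your fallback does not close it. Consider a component $K$ with $t_p=0$ (no pendant children of $r_K$). After substep~(a) every non-trivial count is a multiple of $q^*$, so the current root degree $d'$ is itself a multiple of $q^*$. Substep~(b) is then empty, so you would need $d'=d_K^*$; but $d_K^*=\lceil(1+\epsilon)^i\rceil$ has no reason to be a multiple of $q^*=\lfloor\epsilon d_K^*/(\Delta\binom{2\Delta}{\Delta})\rfloor$. (Concretely, with $\epsilon=\tfrac12$, $\Delta=2$, $d_K=300$ and all branches non-trivial, one gets $d_K^*=195$ and $q^*=8$, and $195\not\equiv 0\pmod 8$.) Your fallback of stepping down to the next $d_K^{**}\in D(\epsilon,\Delta)$ changes the quantizer to some $q^{**}\neq q^*$, and the counts produced in~(a), being multiples of $q^*$, are generally not multiples of $q^{**}$; nor does detaching further blocks of size $q^*$ land the degree on an element of $D(\epsilon,\Delta)$. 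So the procedure as described can fail to produce an $(\epsilon,\Delta)$-clean forest.

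The paper breaks the circularity by a different operation for~(iii). It first enforces~(ii) by cutting edges at $r_K$, and then, to enforce~(iii), it does \emph{not} detach the excess copies of each $T_i$ but instead deletes all of their \emph{internal} edges, leaving their roots still attached to $r_K$ as pendants. Each excess $T_i$ is thereby converted into a $T_p$, which is exempt from~(iii), and crucially the degree of $r_K$ is unchanged, so~(ii) survives intact. The cost per component is at most $(p-1)(\Delta-1)$ times the quantizer, hence at most $\epsilon\,d_F(r_K)$. Replacing your detachment in substep~(a) by this internal-edge deletion --- and performing the degree rounding for~(ii) beforehand --- closes the gap within the stated edge budget.
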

\begin{proof}
Let $F$ be a forest of size $m$.
Since the degree sum of $F$ is $2m$,
there are less than $\frac{2m}{\Delta}$ vertices in $F$ 
that have degree more than $\Delta$.
Rooting every component of $F$ in some vertex
and choosing, for every vertex of degree more than $\Delta$ that is no root,
the edge to its parent,
yields a set $E_0$ of less than $\frac{2m}{\Delta}$ edges of $F$
such that every component of $F_0=F-E_0$ 
contains at most one vertex of degree more than $\Delta$.
We are now going to ensure 
the three properties (i), (ii), and (iii) from the definition of $(\epsilon,\Delta)$-cleanness
by removing further edges in three consecutive cleaning steps.

For every component $K$ of $F_0$, 
choose a vertex $r_K$ of maximum degree within $K$ as its root. 
Call a component $K$ {\it $\frac{1}{3}$-clean} 
if every component of $K-r_K$ has order at most $\Delta$.
Suppose that $K$ is a component that is not $\frac{1}{3}$-clean.
Let $u$ be some vertex of maximum depth in $K$ rooted in $r_K$
such that one plus the number of descendants of $u$ in $K$ is more than $\Delta$.
Removing the edge between $u$ and its parent
cuts off a component $L$ containing $u$ that has at least $\Delta$ edges.
Choosing $u$ as its root $r_L$, the component $L$ is $\frac{1}{3}$-clean.
For the remaining part $K-V(L)$ of $K$,
we keep $r_K$ as its root.
Iteratively repeating this procedure as long as there are 
components that are not $\frac{1}{3}$-clean,
yields a set $E_1$ 
of at most $\frac{m(F_0)}{\Delta}\leq \frac{m}{\Delta}$ edges of $F_0$
such that every component $K$ of $F_1=F_0-E_1$ 
has a specified root $r_K$ and is $\frac{1}{3}$-clean.

For a component $K$ of $F_1$, let $d_K$ denote the degree of $r_K$ in $F_1$.
Call a component $K$ {\it $\frac{2}{3}$-clean} 
if it is $\frac{1}{3}$-clean and $d_K$ belongs to $D(\epsilon,\Delta)$.
For $i\in\mathbb{N}_0$, let $d_i=\left\lceil(1+\epsilon)^i\right\rceil$,
that is, 
$D(\epsilon,\Delta)=[\Delta]_0\cup \left\{ d_i:i\in \mathbb{N}_0\right\}$.
For every component $K$ of $F_1$ with $d_K\not\in D(\epsilon,\Delta)$, 
let $i_K$ be the largest non-negative integer $i$ with $d_i\leq d_K$,
in particular,
$$(1+\epsilon)^{i_K}
\leq  d_{i_K}
\leq d_K
<d_{i_K+1}
<(1+\epsilon)^{i_K+1}+1.$$
Since $d_K\not\in D(\epsilon,\Delta)$, we have $d_K-1\geq \Delta$.
This implies 
$1\leq \epsilon \Delta\leq \epsilon(d_K-1)\leq \epsilon(1+\epsilon)^{i_K+1}$,
and, hence,
$$d_K
<(1+\epsilon)^{i_K+1}+1
\leq (1+\epsilon)^{i_K+1}+\epsilon(1+\epsilon)^{i_K+1}
=(1+\epsilon)^{i_K+2}.$$
Let the set $E_2$ of edges of $F_1$ contain exactly 
$d_K-d_{i_K}$ many edges incident with $r_K$
for every component $K$ of $F_1$ with $d_K\not\in D(\epsilon,\Delta)$.
Since
$$\frac{d_K-d_{i_K}}{d_K}
\leq \frac{(1+\epsilon)^{i_K+2}-(1+\epsilon)^{i_K}}{(1+\epsilon)^{i_K}}
=2\epsilon+\epsilon^2
\leq 3\epsilon,$$
the set $E_2$ contains at most a $3\epsilon$-fraction of the edges of $F_1$,
that is, $|E_2|\leq 3\epsilon m(F_1)\leq 3\epsilon m$.
Let $F_2=F_1-E_2$.
For every component $K$ of $F_2$ 
that contains the root $r$ of some component of $F_1$,
choose $r$ as the root $r_K$ of $K$.
Each component $K$ of $F_2$ 
that does not contain the root of some component of $F_1$
has order at most $\Delta$, and 
we choose any of its vertices as its root $r_K$.
With these choices of the roots, each component of $F_2$ is $\frac{2}{3}$-clean.

Call a component $K$ of $F_2$ {\it clean} 
if it is $\frac{2}{3}$-clean and, 
for every rooted tree $T$ of order at least $2$,
the number of components $L$ of $K-r_K$, 
considered as trees rooted in the neighbor of $r_K$ in $V(L)$,
that are isomorphic to $T$ as a rooted tree 
is a multiple of 
\begin{eqnarray*}\label{e8}
\delta=\max\left\{ 1,\left\lfloor\frac{\epsilon d_{F_2}(r_K)}{\Delta{2\Delta\choose \Delta}}\right\rfloor\right\}.
\end{eqnarray*}
Suppose that some component $K$ of $F_2$ is not clean.
This implies that 
$\delta=\left\lfloor\frac{\epsilon d_{F_2}(r_K)}{\Delta{2\Delta\choose \Delta}}\right\rfloor>1$.
For every component $L$ of $K-r_K$, 
choose the neighbor of $r_K$ in $V(L)$ as its root.
Let $\{ T_1,\ldots,T_p\}$ be as in (\ref{e2}).
Let $K-r_K$ contain exactly $t_i$ components that are isomorphic to $T_i$ 
as rooted trees for every $i\in [p]$.
Now, for every $i\in [p-1]$, remove all edges of exactly 
$t_i-\left\lfloor\frac{t_i}{\delta}\right\rfloor \delta$
copies of the rooted tree $T_i$ among the components of $K-r_K$.
This does not affect the degree of $r_K$ and 
results in a subforest $K'$ of $K$ 
in which the component containing the root $r_K$ is clean
and all remaining components are isolated vertices,
which means that they are also clean.
Furthermore, since each $T_i$ has at most $\Delta-1$ edges, we obtain
\begin{eqnarray*}
|E(K)\setminus E(K')|&=&m(K)-m(K') 
\leq (p-1)(\Delta-1)\delta
< {2\Delta\choose \Delta}\Delta\delta
\leq \epsilon d_{F_2}(r_K)
\leq \epsilon m(K).
\end{eqnarray*}
Performing this operation for every component of $F_2$ that is not clean,
yields a forest $F'=F_2-E_3$ that is $(\epsilon,\Delta)$-clean.
The set $E_3$ of removed edges satisfies 
$|E_3|\leq \epsilon m(F_2)\leq \epsilon m$.

Now,
$$m(F')=m-|E_0|-|E_1|-|E_2|-|E_3|
\geq m-\left(\frac{2}{\Delta}+\frac{1}{\Delta}+3\epsilon+\epsilon\right)m
\geq m-4\left(\epsilon+\frac{1}{\Delta}\right)m.$$
Furthermore, all steps of the cleaning procedures 
can be performed in polynomial time 
for fixed $\epsilon$ and $\Delta$.
This completes the proof.
\end{proof}
A natural choice for $\Delta$ is $\left\lceil\frac{1}{\epsilon}\right\rceil$,
which immediately implies $\epsilon \Delta\geq 1$.

Accordingly, 
let a forest be {\it $\epsilon$-clean} if it is 
$\left(\epsilon,\left\lceil\frac{1}{\epsilon}\right\rceil\right)$-clean and
let 
\begin{eqnarray}
\label{e4}
{\cal T}(\epsilon) & = & \{ T_1,\ldots,T_p\}
\mbox{ be as in (\ref{e2}) for $\Delta=\left\lceil\frac{1}{\epsilon}\right\rceil$ as well as}\\
\label{e5}
D(\epsilon) & = & D(\epsilon,\Delta) 
\mbox{ be as in (\ref{e3}) for $\Delta=\left\lceil\frac{1}{\epsilon}\right\rceil$.}
\end{eqnarray}

\begin{lemma}\label{lemma4}
For every $\epsilon\in (0,1)$, 
there are $c_1,c_2,k\in \mathbb{N}$ 
with the following property:
For every positive integer $n$ at least $2$, 
there is a set ${\cal S}(\epsilon,n)$ 
of at most $c_1\log (n)$ rooted trees 
such that every component 
of every $\epsilon$-clean forest of order $n$
belongs to ${\cal S}(\epsilon,n)$.
Furthermore, 
if $S_1,\ldots,S_q$ is a linear ordering of the elements of ${\cal S}(\epsilon,n)$
such that the degrees $d_i$ of the roots $r_i$ of $S_i$ 
are non-decreasing along this ordering,
and $i,j\in [q]$ are such that $j\geq i+c_2$, 
then $d_i\leq \epsilon d_j$.
Finally, ${\cal S}(\epsilon,n)$ can be constructed in time $O(n^k)$.
\end{lemma}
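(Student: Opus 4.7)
The plan is to take $\mathcal{S}(\epsilon,n)$ to be the set (up to rooted-tree isomorphism) of all rooted trees $(K,r_K)$ of order at most $n$ satisfying conditions (i), (ii), and (iii) from the definition of $(\epsilon,\Delta)$-cleanness for $\Delta=\lceil 1/\epsilon\rceil$. By construction, every component of every $\epsilon$-clean forest of order $n$, rooted at its prescribed root, is isomorphic as a rooted tree to some element of $\mathcal{S}(\epsilon,n)$.

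To bound $|\mathcal{S}(\epsilon,n)|$, I parametrize each tree in $\mathcal{S}(\epsilon,n)$ by its root degree $d$ together with the multiplicities $(m_1,\ldots,m_p)\in\mathbb{N}_0^p$ of the rooted subtrees $T_1,\ldots,T_p$ from $\mathcal{T}(\epsilon)$ attached to the root, subject to $\sum_{i=1}^p m_i=d$ and $1+\sum_{i=1}^p m_i\,|V(T_i)|\leq n$. By condition (ii) we have $d\in D(\epsilon)\cap[0,n-1]$, and counting gives
$$|D(\epsilon)\cap[0,n-1]|\leq\Delta+1+\lfloor\log_{1+\epsilon}n\rfloor+1=O(\log n),$$
with the implicit constant depending only on $\epsilon$. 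For each fixed $d$, condition (iii) forces $m_1,\ldots,m_{p-1}$ to be non-negative multiples of $\delta=\delta(d)$ as in (\ref{e8b}), while $m_p$ (the count of leaves attached to the root) is unconstrained. A short case analysis on the definition of $\delta$---distinguishing the regime $d\leq 2\Delta{2\Delta\choose\Delta}/\epsilon$, in which $d$ is already bounded in terms of $\epsilon$, from the regime $d\geq 2\Delta{2\Delta\choose\Delta}/\epsilon$, in which $\delta\geq\epsilon d/(2\Delta{2\Delta\choose\Delta})$---shows that $d/\delta(d)$ is bounded by a constant $C'(\epsilon)$. Substituting $m_i=\delta s_i$ for $i<p$ and counting $(s_1,\ldots,s_{p-1},m_p)\in\mathbb{N}_0^p$ with $\delta(s_1+\cdots+s_{p-1})+m_p=d$ then bounds the number of trees per degree by a constant $C(\epsilon)$, so $|\mathcal{S}(\epsilon,n)|\leq c_1\log n$ for a suitable $c_1$.

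For the ordering property, I sort $\mathcal{S}(\epsilon,n)$ by root degree, breaking ties arbitrarily. Since at most $C(\epsilon)$ trees can share a given root degree, advancing $c_2$ positions in the ordering passes through at least $\lceil c_2/C(\epsilon)\rceil$ distinct values of $D(\epsilon)$. The key quantitative fact is that any real interval $[d,d/\epsilon]$ contains at most $\Delta+1+\lceil\log_{1+\epsilon}(1/\epsilon)\rceil+O(1)$ distinct elements of $D(\epsilon)$: the portion inside $[\Delta]_0$ contributes at most $\Delta+1$, and in the geometric tail the exponents $i$ with $\lceil(1+\epsilon)^i\rceil\in[d,d/\epsilon]$ span an interval of length $O(\log_{1+\epsilon}(1/\epsilon))$. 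Choosing $c_2$ to be $C(\epsilon)$ times this count plus $1$ guarantees $d_j>d_i/\epsilon$, equivalently $d_i\leq\epsilon d_j$, whenever $j\geq i+c_2$.

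Finally, the construction simply enumerates $d\in D(\epsilon)\cap[0,n-1]$ (polynomially many in $n$) and, for each such $d$, enumerates the constantly many admissible tuples $(m_1,\ldots,m_p)$ that also respect the order bound, outputting the rooted tree obtained by attaching $m_i$ copies of $T_i$ to a fresh root for each $i\in[p]$. The total running time is $O(n^k)$ for some $k$ depending only on $\epsilon$. The main technical hurdle is the uniform bound on $d/\delta(d)$ and on the number of elements of $D(\epsilon)$ in $[d,d/\epsilon]$; both follow quickly from the geometric growth of $D(\epsilon)$ above $\Delta$ and the explicit form of (\ref{e8b}), after which the remainder is routine bookkeeping.
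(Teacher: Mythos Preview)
Your proposal is correct and follows essentially the same approach as the paper's proof. Both arguments parametrize the components by root degree $d\in D(\epsilon)$ (yielding $O(\log n)$ values) and the subtree multiplicities $(m_1,\ldots,m_p)$, use the divisibility constraint (iii) together with the bound $d/\delta(d)\leq C'(\epsilon)$ to show there are only constantly many trees per degree (with $m_p$ determined by the rest), and deduce the ordering property from the geometric spacing of the large elements of $D(\epsilon)$; your version is in fact slightly more explicit than the paper's in quantifying the number of elements of $D(\epsilon)$ in an interval $[d,d/\epsilon]$.
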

\begin{proof}
Let $\Delta=\left\lceil\frac{1}{\epsilon}\right\rceil$.
Let $F$ be an $\epsilon$-clean forest of order $n$.
Let $K$ be a component of $F$ with root vertex $r_K$.

If $d_F(r_K)<\frac{\Delta}{\epsilon}{2\Delta\choose \Delta}$, then 
$$n(K)\leq 1+d_F(r_K)\Delta\leq 1+\frac{\Delta^2}{\epsilon}{2\Delta\choose \Delta},$$
which implies that, for fixed $\epsilon$, 
there are finitely many choices for such a component.

Now, let $d_F(r_K)\geq \frac{\Delta}{\epsilon}{2\Delta\choose \Delta}$.
For $\delta$ as in (\ref{e8b}), we obtain
$$\delta
=\max\left\{ 1,\left\lfloor\frac{\epsilon d_F(r_K)}{\Delta{2\Delta\choose \Delta}}\right\rfloor\right\}
=\left\lfloor\frac{\epsilon d_F(r_K)}{\Delta{2\Delta\choose \Delta}}\right\rfloor
\geq \frac{\epsilon d_F(r_K)}{2\Delta{2\Delta\choose \Delta}}.$$
As in (\ref{e4}),
let $\{ T_1,\ldots,T_p\}$ be the set of all rooted trees of order at most $\Delta$,
where $T_p$ is the rooted tree of order $1$.
Let $K-r_K$ contain exactly $t_i$ components 
that are isomorphic to $T_i$ as rooted trees for every $i\in [p]$.
Since $t_i\leq d_F(r_K)$ for every $i\in [p-1]$,
property (iii) in the definition of $(\epsilon,\Delta)$-cleanness implies
that there are at most
$$1+\frac{d_F(r_K)}{\delta}\leq 1+\frac{2\Delta{2\Delta\choose \Delta}}{\epsilon}$$
possible values for each $t_i$ with $i\in [p-1]$.
Recall that $T_p$ is the rooted tree of order $1$ and that 
$$t_p=d_F(r_K)-(t_1+\cdots+t_{p-1}),$$ 
which implies that $K$ is determined up to isomorphism 
by $t_1,\ldots,t_{p-1}$ and the degree of its root. 
Therefore, for every fixed integer $d$ with 
$\frac{\Delta}{\epsilon}{2\Delta\choose \Delta}\leq d\leq n$,
there are at most
$c_3:=\left(1+\frac{2\Delta{2\Delta\choose \Delta}}{\epsilon}\right)^{p-1}$
many choices for $K$ such that the degree $d_F(r_K)$ of its root $r_K$ equals $d$.
Recall that $p$ is bounded in terms of $\Delta$, which,
in turn, is bounded in terms of $\epsilon$.
Since $D(\epsilon)$ as in (\ref{e5}) contains at most 
$\log_{(1+\epsilon)}(n)=\frac{\log(n)}{\log(1+\epsilon)}$
such values $d$, 
there is some integer $c_1$ depending only on $\epsilon$, and 
there is a set ${\cal S}(\epsilon,n)$ of at most $c_1\log (n)$ rooted trees 
such that every component of every $\epsilon$-clean forest of order $n$
belongs to ${\cal S}(\epsilon,n)$.

Now, let 
$$S_1,\ldots,S_q$$ 
be a linear ordering 
of the elements of ${\cal S}(\epsilon,n)$
such that the degrees $d_i$ of the roots $r_i$ of $S_i$ 
are non-decreasing along this ordering.
This ordering begins with finitely many rooted trees 
with roots of degrees $d_i$ at most $\frac{\Delta}{\epsilon}{2\Delta\choose \Delta}$.
Once the root degrees $d_i$ are at least this value,
the structure of $D(\epsilon)$ implies that 
they increase by a factor of $(1+\epsilon)$
after every $O(c_3)$ steps in the ordering.
This implies the existence of some positive integer $c_2$ such that,
for every $i,j\in [q]$ with $j\geq i+c_2$, 
we have $d_i\leq \epsilon d_j$.

The above arguments imply that 
${\cal S}(\epsilon,n)$ can be constructed in time $O(n^k)$
for some integer $k$ depending only on $\epsilon$.
\end{proof}
We are now in a position to complete the proof.
\begin{proof}[Proof of Theorem \ref{theorem2}]
Let $\epsilon\in (0,1)$ be fixed.
Let $\Delta=\left\lceil\frac{1}{\epsilon}\right\rceil$.
Within this proof we call a forest {\it clean} if it is 
$\left(\epsilon,\left\lceil\frac{1}{\epsilon}\right\rceil\right)$-clean.
Let ${\cal T}={\cal T}(\epsilon)$ be as in (\ref{e5}), that is,
$${\cal T}=\{ T_1,\ldots,T_p\}$$
is the set of all rooted trees of order at most $\Delta$,
where $T_p$ is the rooted tree of order $1$.
Let ${\cal D}={\cal D}(\epsilon)$ be as in (\ref{e5}), that is,
${\cal D}
=[\Delta]_0\cup \left\{ \left\lceil(1+\epsilon)^i\right\rceil:i\in \mathbb{N}_0\right\}.$

Now, let $F_1$ and $F_2$ be two given forests of order $n$ at least $2$,
for which we want to determine a common subgraph $F$ of large size.
Note that, in view of the desired statement, 
it would suffice that $m(F)\geq {\rm lcs}(F_1,F_2)-C\epsilon n$ 
for some constant $C$ independent of $\epsilon$ and $n$.

\medskip

\medskip

\medskip

\noindent {\bf Cleaning the given forests}

Suppose that $F_1$ or $F_2$ are not clean.
Using Lemma \ref{lemma2},
we can determine in polynomial time
a set $E_1$ of edges of $F_1$ and a set $E_2$ of edges of $F_2$
such that $F_1'=F_1-E_1$ and $F_2'=F_2-E_2$ are clean and 
$$|E_1|+|E_2|
\leq 4\left(\epsilon+\frac{1}{\Delta}\right)(m(F_1)+m(F_2))
\leq 4\left(\epsilon+\epsilon\right)2n
=16\epsilon n.$$
If $F$ is a common subgraph of $F_1$ and $F_2$,
then removing from $F$ the at most $16\epsilon n$ edges corresponding 
to edges from $E_1$ or $E_2$ that belong to $F$
yields a common subgraph $F'$ of $F_1'$ and $F_2'$
such that $m(F')\geq m(F)-16\epsilon n$,
in particular, ${\rm lcs}(F'_1,F'_2)\geq {\rm lcs}(F_1,F_2)-16\epsilon n$.
In view of the desired statement, 
we may therefore assume that 
$$\mbox{$F_1$ and $F_2$ are clean.}$$
Using Lemma \ref{lemma4}, we construct in polynomial time the set
$${\cal S}={\cal S}(\epsilon,n)=\{ S_1,\ldots,S_q\}$$
and the integers $c_1$ and $c_2$ as in Lemma \ref{lemma4}, that is, 
${\cal S}$ contains $q\leq c_1\log(n)$ clean rooted trees and
every component of $F_1$ and $F_2$ belongs to ${\cal S}$.
Furthermore, denoting the root of $S_i$ and its degree by $r_i$ and $d_i$, 
respectively, we have 
\begin{eqnarray}\label{e7}
\mbox{$d_i\leq \epsilon d_j$ for every $i,j\in [q]$ with $j\geq i+c_2$.}
\end{eqnarray}

\medskip

\medskip

\pagebreak 

\noindent {\bf Notational interlude}

Let $F$ be a common subgraph of $F_1$ and $F_2$.
Extending an isomorphism between 
a subgraph of $F_1$ that is isomorphic to $F$
and 
a subgraph of $F_2$ that is also isomorphic to $F$
yields a bijection $f:V(F_1)\to V(F_2)$ 
with the following property:
$F$ is isomorphic to a subgraph of the forest $F_f$ with vertex set $V(F_1)$ 
that contains all edges $uv$ of $F_1$ for which $f(u)f(v)$ is an edge in $F_2$.
In fact, $F_f$ itself is a spanning common subgraph of $F_1$ and $F_2$, and
$${\rm lcs}(F_1,F_2)=\max\{ m(F_f):f:V(F_1)\to V(F_2)\mbox{ bijective}\}.$$
Possibly after adding isolated vertices and renaming vertices,
we may assume now and later, for notational convenience,
that $F$ is a spanning subgraph of $F_f$.
See Figure \ref{fig3} for an illustration.

\begin{figure}[H]
    \centering
    \begin{tikzpicture} [scale=0.25]
    \tikzstyle{point}=[draw,circle,inner sep=0.cm, minimum size=1mm, fill=black]
    \tikzstyle{point2}=[draw,circle,inner sep=0.cm, minimum size=0.5mm, fill=black]
    \tikzstyle{line1}=[line width=0.4mm]
    \tikzstyle{line2}=[thin]

\node[point] (a) at (0,0) [label=above:] {};
\coordinate (a) at (0,0) [label=above:] {};
\node[point] (b) at (0,2) [label=above:] {};
\coordinate (b) at (0,2) [label=above:] {};
\draw[line1] (a) -- (b);
\node[point] (c) at (0,4) [label=above:] {};
\coordinate (c) at (0,4) [label=above:] {};
\node[point] (d) at (0,6) [label=above:] {};
\coordinate (d) at (0,6) [label=above:] {};
\draw[line1] (c) -- (d);
\node[point] at (0,10) [label=above:] {};
\node[point] (d) at (0,14) [label=above:] {};
\coordinate (d) at (0,14) [label=above:] {};
\node[point] (e) at (2,12) [label=above:] {};
\coordinate (e) at (2,12) [label=above:] {};
\draw[line1] (d) -- (e);
\node[point] at (4,14) [label=above:] {};
\node[point] (x) at (6,12) [label=above:] {};
\coordinate (x) at (6,12) [label=above:] {};
\node[point] (y) at (6,8) [label=above:] {};
\coordinate (y) at (6,8) [label=above:] {};
\node[point] at (6,10) [label=above:] {};
\draw[line1] (x) -- (y);
\node[point] at (8,10) [label=above:] {};
\node[point] (a) at (4,2) [label=above:] {};
\coordinate (a) at (4,2) [label=above:] {};
\node[point] (c) at (4,4) [label=above:] {};
\coordinate (c) at (4,4) [label=above:] {};
\node[point] (b) at (6,4) [label=above:] {};
\coordinate (b) at (6,4) [label=above:] {};
\draw[line1] (a) -- (b) -- (c);
\node[point] at (6,2) [label=above:] {};
\node[point] (a) at (6,0) [label=above:] {};
\coordinate (a) at (6,0) [label=above:] {};
\node[point] (b) at (8,0) [label=above:] {};
\coordinate (b) at (8,0) [label=above:] {};
\node[point] (c) at (8,2) [label=above:] {};
\coordinate (c) at (8,2) [label=above:] {};
\draw[line1] (a) -- (b) -- (c);
\node at (4,18) [label=above:] {$F$};

\node at (12,7) [label=above:] {$\subseteq $};

\begin{scope}[shift={(2,0)}]
\begin{scope}[shift={(14,0)}]

\node[point] (a) at (0,0) [label=above:] {};
\coordinate (a) at (0,0) [label=above:] {};
\node[point] (b) at (0,2) [label=above:] {};
\coordinate (b) at (0,2) [label=above:] {};
\draw[line1] (a) -- (b);
\node[point] (c) at (0,4) [label=right:$y$] {};
\coordinate (c) at (0,4) [label=above:] {};
\draw[line2] (b) -- (c);
\node[point] (d) at (0,6) [label=above:] {};
\coordinate (d) at (0,6) [label=above:] {};
\draw[line1] (c) -- (d);
\node[point] (x) at (0,10) [label=above:] {};
\coordinate (x) at (0,10) [label=above:] {};
\node[point] (d) at (0,14) [label=above:] {};
\coordinate (d) at (0,14) [label=above:] {};
\node[point] (e) at (2,12) [label=right:$u$] {};
\coordinate (e) at (2,12) [label=above:] {};
\draw[line1] (d) -- (e);
\node[point] (y) at (4,14) [label=right:] {};
\node at (3.5,14) [label=right:$f^{-1}(v)$] {};

\coordinate (y) at (4,14) [label=above:] {};
\draw[line2] (x) -- (y);
\node[point] (x) at (6,12) [label=right:$x$] {};
\coordinate (x) at (6,12) [label=above:] {};
\node[point] (y) at (6,8) [label=above:] {};
\coordinate (y) at (6,8) [label=above:] {};
\node[point] at (6,10) [label=above:] {};
\draw[line1] (x) -- (y);
\node[point] (z) at (8,10) [label=above:] {};
\coordinate (z) at (8,10) [label=above:] {};
\draw[line2] (y) -- (z);
\node[point] (a) at (4,2) [label=above:] {};
\coordinate (a) at (4,2) [label=above:] {};
\node[point] (c) at (4,4) [label=above:] {};
\coordinate (c) at (4,4) [label=above:] {};
\node[point] (b) at (6,4) [label=above:] {};
\coordinate (b) at (6,4) [label=above:] {};
\draw[line1] (a) -- (b) -- (c);
\node[point] (d) at (6,2) [label=above:] {};
\coordinate (d) at (6,2) [label=above:] {};
\node[point] (a1) at (6,0) [label=above:] {};
\coordinate (a1) at (6,0) [label=above:] {};
\node[point] (b1) at (8,0) [label=above:] {};
\coordinate (b1) at (8,0) [label=above:] {};
\node[point] (c1) at (8,2) [label=above:] {};
\coordinate (c1) at (8,2) [label=above:] {};
\draw[line1] (a1) -- (b1) -- (c1);
\draw[line2] (a) -- (d) -- (a1);
\node at (4,18) [label=above:] {$F_1$};

\begin{scope}[shift={(2,0)}]    
\draw[line1,->] (10,10) -- (14,10);
\node at (12,11) [label=above:] {$f$};
\draw[line1,<-] (10,4) -- (14,4);
\node at (12,5) [label=above:] {$f^{-1}$};
\end{scope}

\end{scope}

\begin{scope}[shift={(6,0)}]
\begin{scope}[shift={(28,0)}]

\node[point] (a) at (0,0) [label=below:$f(x)$] {};
\coordinate (a) at (0,0) [label=above:] {};
\node[point] (b) at (0,2) [label=above:] {};
\coordinate (b) at (0,2) [label=above:] {};
\node[point] (c) at (0,4) [label=above:] {};
\coordinate (c) at (0,4) [label=above:] {};
\draw[line1] (a) -- (c);
\node[point] (d) at (0,6) [label=above:] {};
\coordinate (d) at (0,6) [label=above:] {};
\node[point] (d) at (2,2) [label=above:] {};
\coordinate (d) at (2,2) [label=right:] {};
\draw[line2] (b) -- (d);
\node[point] (x) at (0,10) [label=above:] {};
\coordinate (x) at (0,10) [label=above:] {};
\node at (1.5,7.5) [label=right:$f(u)$] {};
\node[point] (y) at (2,8) [label=right:] {};
\coordinate (y) at (2,8) [label=above:] {};
\draw[line1] (x) -- (y);
\node[point] (z) at (4,10) [label=left:$v$] {};
\coordinate (z) at (4,10) [label=right:] {};
\draw[line2] (c) -- (y) -- (z);
\node[point] (a) at (4,12) [label=above:] {};
\coordinate (a) at (4,12) [label=above:] {};
\node[point] (b) at (6,14) [label=above:] {};
\coordinate (b) at (6,14) [label=above:] {};
\node[point] (c) at (4,14) [label=above:] {};
\coordinate (c) at (4,14) [label=above:] {};
\draw[line1] (a) -- (b) -- (c);
\node[point] (d) at (6,12) [label=above:] {};
\coordinate (d) at (6,12) [label=above:] {};
\node[point] (e) at (8,12) [label=above:] {};
\coordinate (e) at (8,12) [label=above:] {};
\draw[line2] (d) -- (b) -- (e);
\node[point] (f) at (6,10) [label=above:] {};
\coordinate (f) at (6,10) [label=above:] {};
\node[point] (g) at (8,10) [label=above:] {};
\coordinate (g) at (8,10) [label=above:] {};
\draw[line1] (f) --(g)--(e);
\node[point] (a) at (8,0) [label=above:] {};
\coordinate (a) at (8,0) [label=above:] {};
\node[point] (b) at (8,2) [label=above:] {};
\coordinate (b) at (8,2) [label=above:] {};
\draw[line1] (a)--(b);
\node[point] (c) at (8,4) [label=right:$f(y)$] {};
\coordinate (c) at (8,4) [label=above:] {};
\node[point] (d) at (8,6) [label=above:] {};
\coordinate (d) at (8,6) [label=above:] {};
\draw[line1] (c) -- (d);
\draw[line2] (b) arc(-90:-270:2);
\node at (4,18) [label=above:] {$F_2$};
\node at (12,7) [label=above:] {$\supseteq $};

\end{scope}

\begin{scope}[shift={(44,0)}]

\node[point] (a) at (0,0) [label=above:] {};
\coordinate (a) at (0,0) [label=above:] {};
\node[point] (b) at (0,2) [label=above:] {};
\coordinate (b) at (0,2) [label=above:] {};
\node[point] (c) at (0,4) [label=above:] {};
\coordinate (c) at (0,4) [label=above:] {};
\draw[line1] (a) -- (c);
\node[point] (d) at (0,6) [label=above:] {};
\coordinate (d) at (0,6) [label=above:] {};
\node[point] (d) at (2,2) [label=above:] {};
\coordinate (d) at (2,2) [label=above:] {};
\node[point] (x) at (0,10) [label=above:] {};
\coordinate (x) at (0,10) [label=above:] {};
\node[point] (y) at (2,8) [label=above:] {};
\coordinate (y) at (2,8) [label=above:] {};
\draw[line1] (x) -- (y);
\node[point] (z) at (4,10) [label=above:] {};
\coordinate (z) at (4,10) [label=above:] {};
\node[point] (a) at (4,12) [label=above:] {};
\coordinate (a) at (4,12) [label=above:] {};
\node[point] (b) at (6,14) [label=above:] {};
\coordinate (b) at (6,14) [label=above:] {};
\node[point] (c) at (4,14) [label=above:] {};
\coordinate (c) at (4,14) [label=above:] {};
\draw[line1] (a) -- (b) -- (c);
\node[point] (d) at (6,12) [label=above:] {};
\coordinate (d) at (6,12) [label=above:] {};
\node[point] (e) at (8,12) [label=above:] {};
\coordinate (e) at (8,12) [label=above:] {};
\node[point] (f) at (6,10) [label=above:] {};
\coordinate (f) at (6,10) [label=above:] {};
\node[point] (g) at (8,10) [label=above:] {};
\coordinate (g) at (8,10) [label=above:] {};
\draw[line1] (f) --(g)--(e);
\node[point] (a) at (8,0) [label=above:] {};
\coordinate (a) at (8,0) [label=above:] {};
\node[point] (b) at (8,2) [label=above:] {};
\coordinate (b) at (8,2) [label=above:] {};
\draw[line1] (a)--(b);
\node[point] (c) at (8,4) [label=above:] {};
\coordinate (c) at (8,4) [label=above:] {};
\node[point] (d) at (8,6) [label=above:] {};
\coordinate (d) at (8,6) [label=above:] {};
\draw[line1] (c) -- (d);
\node at (4,18) [label=above:] {$f(F)$};
\end{scope}
\end{scope}
\end{scope}

\end{tikzpicture}
\caption{A common subgraph $F$ of $F_1$ and $F_2$
as a spanning subgraph of $F_1$ 
together with a corresponding bijection $f:V(F_1)\to V(F_2)$.
The forest $f(F)$ with vertex set $V(F_2)$ and edge set
$\{ f(x)f(y):xy\in E(F)\}$ is a spanning subgraph of $F_2$.
Note that $F_f$ contains strictly more edges than $F$;
the edge $uf^{-1}(v)$ of $F_1$ belongs to $F_f$,
because the edge $f(u)v$ belongs to $F_2$.
Up to isomorphism, $F$ is described by the multiplicities of its components;
it consists of $4$ copies of $T_p$,
$3$ copies of $K_2$, and
$3$ copies of $P_3$.
Note that there are two non-isomorphic ways to choose a root for $P_3$.}\label{fig3}
\end{figure}
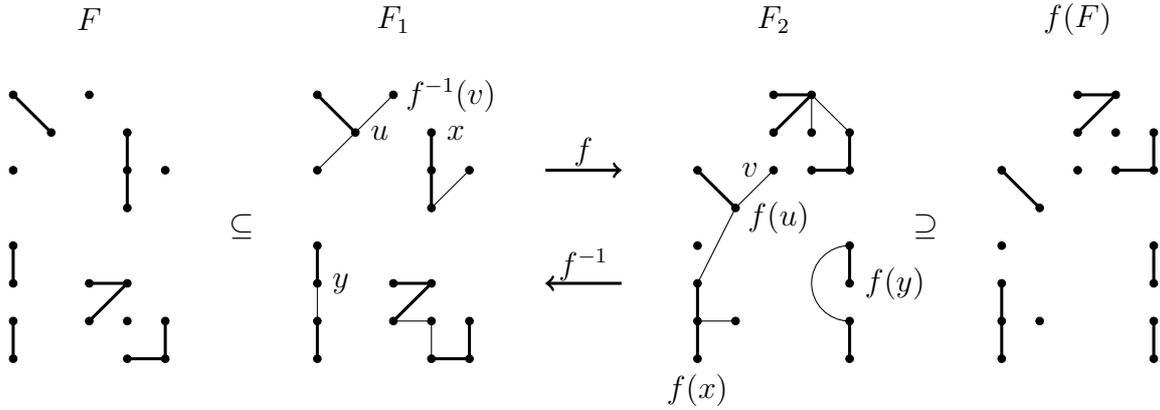

\medskip

\medskip

\noindent {\bf Nice solutions --- pairing or isolating large degree roots}

We call a common subgraph $F$ of $F_1$ and $F_2$ {\it nice}
if there is some bijection $f:V(F_1)\to V(F_2)$ 
such that $F$ is a spanning subgaph of $F_f$ 
and $F$ is a common subgraph 
of the two forests $F_1'$ and $F_2'$ 
constructed as follows:
\begin{itemize}
\item $F_1'$ is the spanning subgraph of $F_1$ 
that is obtained by removing all edges 
incident with every root vertex $r$ of some component of $F_1$ 
such that $d_{F_1}(r)\geq \frac{\Delta}{\epsilon}$ and
either 
$\epsilon d_{F_1}(r)\geq d_{F_2}(f(r))$
or 
$\epsilon d_{F_2}(f(r))\geq d_{F_1}(r)$.
Note that, in both cases, 
\begin{eqnarray}\label{e10}
\min\{ d_{F_1}(r),d_{F_2}(f(r))\}
\leq \epsilon \max\{ d_{F_1}(r),d_{F_2}(f(r))\}
\leq \epsilon (d_{F_1}(r)+d_{F_2}(f(r))).
\end{eqnarray}
\item $F_2'$ is the spanning subgraph of $F_2$ 
that is obtained by removing all edges 
incident with every root vertex $r$ of some component of $F_2$ 
such that $d_{F_2}(r)\geq \frac{\Delta}{\epsilon}$ and
either 
$\epsilon d_{F_2}(r)\geq d_{F_1}(f^{-1}(r))$
or 
$\epsilon d_{F_1}(f^{-1}(r))\geq d_{F_2}(r)$.
Again, in both cases, 
\begin{eqnarray*}
\min\{ d_{F_1}(f^{-1}(r)),d_{F_2}(r)\}
\leq \epsilon \max\{ d_{F_1}(f^{-1}(r)),d_{F_2}(r)\}
\leq \epsilon (d_{F_1}(f^{-1}(r))+d_{F_2}(r)).
\end{eqnarray*}
\end{itemize}
Note that if a vertex $r$ of $F_1$ of degree at least $\frac{\Delta}{\epsilon}$,
which is necessarily the root of some component of $F_1$,
is such that $f(r)$ is no root of some component of $F_2$,
then the degree of $f(r)$ in $F_2$ is at most $\Delta$.
It follows that $\epsilon d_{F_1}(r)\geq d_{F_2}(f(r))$, 
which implies that $r$ will be isolated in the nice subgraph $F$.

Now, suppose that $F$ is a common subgraph of $F_1$ and $F_2$
with $m(F)\geq {\rm lcs}(F_1,F_2)-\epsilon n$ that is not nice.
For convenience, we may assume that 
$F$ is a spanning subgraph of $F_f$ for some bijection $f$.
Since $F$ is not nice, 
$F$ is not a common subgraph of $F_1'$ and $F_2'$ 
as defined above using this $f$.
Since the degree of a root $r$ of some component of $F_1$ within the forest $F$ 
is at most $\min\{ d_{F_1}(r),d_{F_2}(f(r))\}$,
(\ref{e10}) implies that removing at most
\begin{eqnarray*} 
\sum\limits_{r\in V(F_1)}\min\{ d_{F_1}(r),d_{F_2}(f(r))\}
&\leq &\epsilon \sum\limits_{r\in V(F_1)} (d_{F_1}(r)+d_{F_2}(f(r)))\\
&=& \epsilon \left(\sum\limits_{r\in V(F_1)} d_{F_1}(r)
+\sum\limits_{r'\in V(F_2)}d_{F_2}(r')\right)\\
&=& 2\epsilon(m(F_1)+m(F_2))
\leq 4\epsilon n
\end{eqnarray*} 
edges from $F$ yields a subgraph $F'$ of $F_1'$.
Symmetrically, removing at most $4\epsilon n$ further edges from this subgraph $F'$
yields a subgraph $F''$ of $F_1'$ and $F_2'$.
Now, $F''$ is a nice common subgraph of $F_1$ and $F_2$, and 
$$m(F')\geq m(F)-8\epsilon n\geq {\rm lcs}(F_1,F_2)-9\epsilon n.$$
Therefore, in view of the desired statement,
it suffices to determine in polynomial time 
a nice common subgraph $F$ of $F_1$ and $F_2$
that is a spanning subgraph of $F_1$ and has many edges.

\medskip

\medskip

\medskip

\noindent {\bf (Potentially) large components in nice common subgraphs}

Our approach to find a sufficiently good 
nice common subgraph of $F_1$ and $F_2$ 
consists in efficiently generating polynomially 
many options for the roles of the high degree root vertices 
of components of $F_1$ and $F_2$
within components of the common subgraph 
that are (potentially) of order at least $1+\frac{\Delta^2}{\epsilon}$.
Removing the corresponding parts from $F_1$ and $F_2$ yields 
forests $F_1'$ and $F_2'$, whose components all have orders 
less than $1+\frac{\Delta^2}{\epsilon}$, and Lemma \ref{lemma3} 
allows to determine common subgraphs of $F_1'$ and $F_2'$
of maximum size in polynomial time.
Returning the best overall solution encountered in this way, 
while considering the polynomially many options for the high degree roots,
yields a sufficiently good nice common subgraph of $F_1$ and $F_2$.

Let $F$ be a nice common subgraph of $F_1$ and $F_2$
that is a spanning subgraph of the forest $F_f$ 
for some bijection $f:V(F_1)\to V(F_2)$.
Let $K$ be a component of $F$ 
that contains a vertex $r$ 
with $d_{F_1}(r)\geq \frac{\Delta}{\epsilon}$
or
$d_{F_2}(f(r))\geq \frac{\Delta}{\epsilon}$.
Since $F_1$ is clean, every component of $F$ that does not contain such a vertex
necessarily has order less than $1+\frac{\Delta^2}{\epsilon}$.

It follows that
\begin{itemize}
\item $K$ is an induced subgraph of 
the component $K_1$ of $F_1$ with root $r$ and 
\item the tree $f(K)$
with vertex set $f(V(K))$ and edge set $\{ f(u)f(v):uv\in E(K)\}$
is an induced subgraph of 
the component $K_2$ of $F_2$ with root $f(r)$.
\end{itemize}
Since $F_1$ and $F_2$ are clean, 
all their components are rooted trees from ${\cal S}$.
Let $K_1$ be a copy of $S_{i_1}$ and
let $K_2$ be a copy of $S_{i_2}$.
Denote the root $r$ of $K_1$ by $r_{i_1}$ and 
the root $f(r)$ of $K_2$ by $r_{i_2}$.

For every child $u$ of $r_{i_1}$ in $K_1$, 
\begin{itemize}
\item either the edge $r_{i_1}u$ does not belong to $F$, 
which means that $u$ does not belong to $K$, 
\item or there is some child $v$ of $r_{i_2}$ in $K_2$ such that 
$r_{i_1}u$ belongs to $K\subseteq F$,
$u$ belongs to $K$,
$f(u)=v$, and
$v$ belongs to $f(K)$.
\end{itemize}
Symmetric options hold for every child $v$ of $r_{i_2}$ in $K_2$.

If some child $u$ of $r_{i_1}$ in $K_1$ belongs to $K$ and 
the child $v=f(u)$ of $r_{i_2}$ in $K_2$ belongs to $f(K)$, then 
the component of $K_1-r_{i_1}$ that contains $u$ is (the copy of) 
a tree $T_{j_1}$ from ${\cal T}$ rooted in $u$, and
the component of $K_2-r_{i_2}$ that contains $v$ is (the copy of) 
a tree $T_{j_2}$ from ${\cal T}$ rooted in $v$.
We now consider the options for the subtrees of $K$ within $K_1-r_{i_1}$
and of $f(K)$ within $K_2-r_{i_2}$.

\medskip

\medskip

\medskip

\noindent {\bf The finitely many options to overlay trees from ${\cal T}$ at their roots}

Let $X$ be the set of all $5$-tupels $x$ such that 
\begin{itemize}
\item either $x=(j_1,j_2,A_0,A_1,A_2)$, where 
\begin{itemize}
\item $j_1,j_2\in [p]$,
\item $A_0$ is one of the rooted trees from $\{ T_1,\ldots,T_p\}$, 
\item for each $\ell\in [2]$,
$A_0$ is isomorphic as a rooted tree 
to a rooted subtree $A_{0,\ell}$ of $T_{j_\ell}$
that is rooted in the root of $T_{j_\ell}$,
\item $A_1\simeq T_{j_1}-V(A_{0,1})$, and
\item $A_2\simeq T_{j_2}-V(A_{0,2})$,
\end{itemize}
see Figure \ref{fig2} for an illustration,
\item or $x=(j_1,\emptyset,\emptyset,\emptyset,\emptyset)$ for $j_1\in [p]$,
corresponding to the option that the child of $r_{i_1}$ in $K_1$
that belongs to a copy of $T_{j_1}$ in $K_1-r_{i_1}$
does not belong to $K$.
\item or $x=(\emptyset,j_2,\emptyset,\emptyset,\emptyset)$ for $j_2\in [p]$,
corresponding to the option that the child of $r_{i_2}$ in $K_2$
that belongs to a copy of $T_{j_2}$ in $K_2-r_{i_2}$
does not belong to $f(K)$.
\end{itemize}
Note that we add the $5$-tupels of the form 
$(j_1,\emptyset,\emptyset,\emptyset,\emptyset)$
and 
$(\emptyset,j_2,\emptyset,\emptyset,\emptyset)$
for notational convenience:
Together with the $5$-tupels of the form $(j_1,j_2,A_0,A_1,A_2)$,
they allow to clarify the role of all children (up to symmetry) of the root of $K_1$
as well as of all children of the root of $K_2$ within $K$.

\begin{figure}[H]
   \centering
    \begin{tikzpicture} [scale=0.4] 
    \tikzstyle{point}=[draw,circle,inner sep=0.cm, minimum size=1mm, fill=black]
    \tikzstyle{point2}=[draw,circle,inner sep=0.cm, minimum size=0.5mm, fill=black]

    \node[point] (u) at (8,10)  {};
    \node at (8,10.2) [label=above right:\text{root of $T_{j_1}$ and of $A_{0,1}$}] {};
    \coordinate (r) at (8,11) {};

    \begin{scope}[shift={(8,0)}]
        \node[point] (a) at (0,7) [label=left:] {};
        \node[point] (b) at (0+2*0.5,7-2*0.866) [label=left:] {};
        \node[point] (c) at (-4*0.5,7-4*0.866) [label=left:] {};
        \coordinate (a3) at (-3*0.5,7-3*0.866) [label=left:] {};

        \node[point] (cc) at (-2*0.5,7-2*0.866) [label=left:] {};
        \node[point] (d) at (4*0.5,7-4*0.866) [label=left:] {};
        \coordinate (a4) at (3*0.5,7-3*0.866) [label=left:] {};

        \node[point] (d2) at (4*0.5,7-6*0.866) [label=left:] {};
        \node[point] (e) at (0,7-4*0.866) [label=left:] {};
        \draw (a) -- (d) -- (d2);
        \draw (a) -- (c);
        \draw (e) -- (b);
        \draw (a) -- (u);
    \end{scope}

    \begin{scope}[shift={(3,0)}]
        \node[point] (a) at (0,7) [label=left:] {};
        \coordinate (a0) at (-0.5,7.5) [label=left:] {};

        \coordinate (a2) at (0.5,7-0.866) [label=left:] {};
        \node[point] (b) at (2*0.5,7-2*0.866) [label=left:] {};
        \node[point] (cc) at (-2*0.5,7-4*0.866) [label=left:] {};
        \node[point] (c) at (-2*0.5,7-2*0.866) [label=left:] {};
        \coordinate (a1a) at (-2.3,7-3*0.866) [label=left:] {};
        \coordinate (a1b) at (-0.8,7-3*0.866) [label=left:] {};

        \node[point] (x) at (-4*0.5,7-6*0.866) [label=left:] {};
        \node[point] (y) at (0,7-6*0.866) [label=left:] {};
        \draw (x) -- (cc) -- (y);

        \node[point] (d) at (2*0.5,7-4*0.866) [label=left:] {};
        \node[point] (d2) at (2*0.5,7-6*0.866) [label=left:] {};
        \draw (a) -- (b) -- (d2);
        \draw (a) -- (c) -- (cc);
        \draw (a) -- (u);
    \end{scope}

    \begin{scope}[shift={(13,0)}]
        \node[point] (a) at (0,7) [label=left:] {};
        \coordinate (a8) at (1,7.5) [label=left:] {};

        \node[point] (b) at (2*0.5,7-2*0.866) [label=left:] {};
        \coordinate (a5) at (-1*0.5,7-1*0.866) [label=left:] {};
        \node[point] (c) at (-2*0.5,7-2*0.866) [label=left:] {};
        \node[point] (cc) at (-2*0.5,7-4*0.866) [label=left:] {};
        \node[point] (ccc) at (-2*0.5,7-6*0.866) [label=left:] {};
        \coordinate (a6) at (2*0.5,7-3*0.866) [label=left:] {};
        \coordinate (a7) at (4*0.5,7-3*0.866) [label=left:] {};

        \node[point] (d) at (4*0.5,7-4*0.866) [label=left:] {};
        \node[point] (x) at (2*0.5,7-6*0.866) [label=left:] {};
        \node[point] (y) at (6*0.5,7-6*0.866) [label=left:] {};
        \draw (x) -- (d);
        \draw (a) -- (y);
        \draw (a) -- (c) -- (cc) -- (ccc);
        \draw (a) -- (u);
    \end{scope}

    \draw [rounded corners, rounded corners=3mm] (0,1+0.1) -- ($(a1a)-(0,0.3)$) -- ($(a1b)-(0,0.3)$) -- ($(a2)-(-0.4,0)$) -- ($(a3)-(0,0.3)$) -- ($(a4)-(0,0.3)$) -- ($(a5)-(0.3,0)$) -- ($(a6)-(0.2,0.3)$) -- ($(a7)-(-0.3,0.2)$) -- (17.2,1+0.1) -- cycle;

    \node[align=left] at (18,3) {$A_1$};

    \draw [rounded corners, rounded corners=3mm] (a0) -- ($(a1a)-(0,-0.1)$) -- ($(a1b)-(0,-0.1)$) -- ($(a2)-(-0.4,-0.4)$) -- ($(a3)-(0,-0.1)$) -- ($(a4)-(0,-0.1)$) -- ($(a5)-(0.3,-0.4)$) -- ($(a6)-(0.2,-0.1)$) -- ($(a7)-(-0.3,-0.2)$) -- (a8) -- (r) -- cycle;

    \node[align=left] at (15,8) {$A_{0,1}$};

\begin{scope}[shift={(20,0)}]
    \tikzstyle{point}=[draw,circle,inner sep=0.cm, minimum size=1mm, fill=black]
    \tikzstyle{point2}=[draw,circle,inner sep=0.cm, minimum size=0.5mm, fill=black]

    \node[point] (u) at (8,10) {};
    \node at (8,10.2) [label=above right:\text{root of $T_{j_2}$ and of $A_{0,2}$}] {};
    \coordinate (r) at (8,11)  {};

    \begin{scope}[shift={(8,0)}]
        \node[point] (a) at (0,7) [label=left:] {};
        \node[point] (b) at (0+2*0.5,7-2*0.866) [label=left:] {};
        \node[point] (c) at (-4*0.5,7-4*0.866) [label=left:] {};

        \coordinate (a3) at (-2*0.5,7-3*0.866) [label=left:] {};

        \node[point] (cc) at (-2*0.5,7-2*0.866) [label=left:] {};

        \node[point] (d) at (4*0.5,7-4*0.866) [label=left:] {};
        \coordinate (a4) at (2*0.5,7-1*0.866) [label=left:] {};

        \node[point] (e) at (0,7-4*0.866) [label=left:] {};

        \node[point] (f1) at (-1,7-6*0.866) [label=left:] {};
        \node[point] (f2) at (1,7-6*0.866) [label=left:] {};
        \node[point] (f3) at (1,7-8*0.866) [label=left:] {};

        \draw (f1) -- (e) -- (f2) -- (f3);

        \draw (a) -- (d);
        \draw (a) -- (c);
        \draw (e) -- (b);
        \draw (a) -- (u);
    \end{scope}

    \begin{scope}[shift={(3,0)}]
        \node[point] (a) at (0,7) [label=left:] {};
        \coordinate (a0) at (-0.5,7.5) [label=left:] {};

        \coordinate (a2) at (0.5,7-0.866) [label=left:] {};
        \node[point] (b) at (2*0.5,7-2*0.866) [label=left:] {};
        \node[point] (cc) at (-2*0.5,7-6*0.866) [label=left:] {};
        \node[point] (c) at (-2*0.5,7-2*0.866) [label=left:] {};
        \node[point] at (-2*0.5,7-4*0.866) [label=left:] {};

        \coordinate (a1a) at (-2.3,7-3*0.866) [label=left:] {};
        \coordinate (a1b) at (-0.8,7-3*0.866) [label=left:] {};

        \node[point] (x) at (-4*0.5,7-8*0.866) [label=left:] {};
        \node[point] (y) at (0,7-8*0.866) [label=left:] {};
        \draw (x) -- (cc) -- (y);

        \node[point] (d) at (2*0.5,7-4*0.866) [label=left:] {};
        \node[point] (d2) at (2*0.5,7-6*0.866) [label=left:] {};
        \draw (a) -- (b) -- (d2);
        \draw (a) -- (c) -- (cc);
        \draw (a) -- (u);
    \end{scope}

    \begin{scope}[shift={(13,0)}]
        \node[point] (a) at (0,7) [label=left:] {};
        \coordinate (a8) at (1,7.5) [label=left:] {};

        \node[point] (b) at (2*0.5,7-2*0.866) [label=left:] {};
        \coordinate (a5) at (-2*0.5,7-3*0.866) [label=left:] {};
        \node[point] (c) at (-2*0.5,7-2*0.866) [label=left:] {};
        \node[point] (cc) at (-2*0.5,7-4*0.866) [label=left:] {};
        \node[point] (ccc) at (-2*0.5,7-6*0.866) [label=left:] {};

        \node[point] (f1) at (-4*0.5,7-8*0.866) [label=left:] {};
        \node[point] (f2) at (0*0.5,7-8*0.866) [label=left:] {};
        \draw (f1) -- (ccc) -- (f2);

        \coordinate (a6) at (2*0.5,7-3*0.866) [label=left:] {};
        \coordinate (a7) at (4*0.5,7-3*0.866) [label=left:] {};

        \node[point] (d) at (4*0.5,7-4*0.866) [label=left:] {};
        \node[point] (x) at (2*0.5,7-6*0.866) [label=left:] {};
        \node[point] (y) at (6*0.5,7-6*0.866) [label=left:] {};
        \draw (x) -- (d);
        \draw (a) -- (y);
        \draw (a) -- (c) -- (cc) -- (ccc);
        \draw (a) -- (u);
    \end{scope}

    \draw [rounded corners, rounded corners=3mm] (0,1-2*0.866+0.1) -- ($(a1a)-(0,0.3)$) -- ($(a1b)-(0,0.3)$) -- ($(a2)-(-0.4,0)$) -- ($(a3)-(0,0.3)$) -- ($(a4)-(0.2,0)$) -- ($(a5)-(0.3,0.2)$) -- ($(a7)-(-0.3,0.2)$) -- (16.9 ,7-6*0.866-0.2) -- (13,1-2*0.866+0.1) -- cycle;

    \node[align=left] at (18,3) {$A_2$};

    \draw [rounded corners, rounded corners=3mm] (a0) -- ($(a1a)-(0,-0.1)$) -- ($(a1b)-(0,-0.1)$) -- ($(a2)-(-0.4,-0.4)$) -- ($(a3)-(0,-0.1)$) -- ($(a4)-(0.2,-0.4)$) -- ($(a5)-(0.3,-0.2)$) -- ($(a7)-(-0.3,-0.2)$) -- (a8) -- (r) -- cycle;

    \node[align=left] at (15,8) {$A_{0,2}$};
\end{scope}

\end{tikzpicture}

\caption{Subgraphs 
$A_{0,1}$ and $A_1$ of $T_{j_1}\subseteq F_1$ (on the left) and
$A_{0,2}$ and $A_2$ of $T_{j_2}\subseteq F_2$ (on the right) 
for some $x=(j_1,j_2,A_0,A_1,A_2)$ in $X$.
Note that there are different isomorphic copies of $A_0$ 
within $T_{j_1}$ and $T_{j_2}$ containing their roots,
which lead to different possibilities for the subforests $A_1$ and $A_2$,
completely specified up to isomorphism by the considered $5$-tupels.}\label{fig2}
\end{figure}
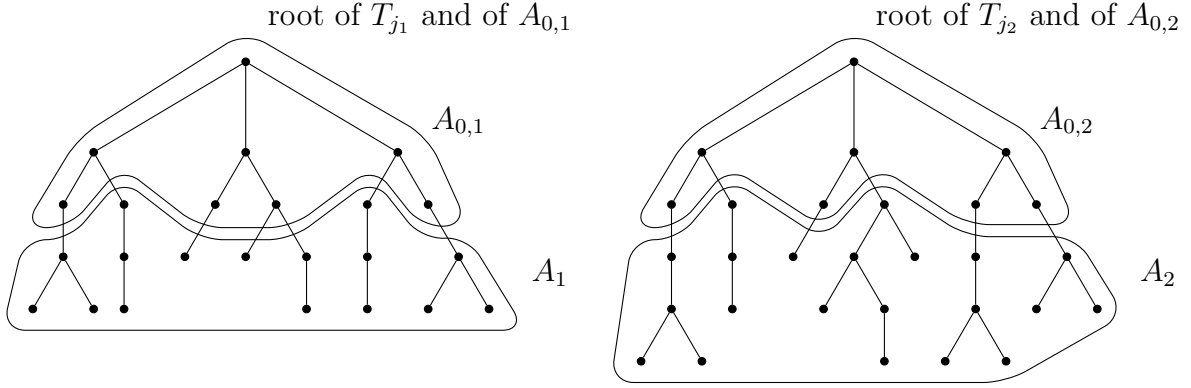
Since, for fixed $\epsilon$, 
the number $p$ of trees in ${\cal T}$ is finite and 
all these trees have finite order at most $\Delta$,
the set $X$ is finite for fixed $\epsilon$.
Let 
$$X=\{ x_1,\ldots,x_o\}.$$
For $j\in [p]$, 
let $X_{(j,*,*,*,*)}$ be the subset of all $5$-tupels in $X$ 
that have $j$ as their first entry, and
let $X_{(*,j,*,*,*)}$ be the subset of all $5$-tupels in $X$ 
that have $j$ as their second entry.

For each $\ell\in [2]$ and $j\in [p]$, 
let $S_{i_\ell}-r_{i_\ell}$ contain $t_{\ell,j}$ components of $T_j$, that is, 
the $p$-tupels $(t_{1,1},\ldots,t_{1,p})$ and $(t_{2,1},\ldots,t_{2,p})$
determine $S_{i_1}$ and $S_{i_2}$ up to isomorphism, respectively.
In particular,
$t_{\ell,1}+\cdots+t_{\ell,p}$ equals the degree $d_{i_\ell}$ 
of the root $r_{i_\ell}$ of $S_{i_\ell}$.
See Figure \ref{fig1} illustrating the structure of the rooted trees in ${\cal S}$.

\medskip

\medskip

\medskip

\noindent {\bf Quantizing the options for a single (possibly) large component}

All essentially different options for $K$ within $K_1\simeq S_{i_1}$
and $f(K)$ within $K_2\simeq S_{i_2}$ 
can be encoded in the obvious way
by an $o$-tupel $(y_1,\ldots,y_o)\in \mathbb{N}_0^o$;
in particular, 
for $x_\ell=(j_1,j_2,A_0,A_1,A_2)$ in $X$,
there are $y_\ell$ pairs $(L_1,L_2)$ such that 
$L_1$ is a component of $K_1-r_{i_1}$ isomorphic to $T_{j_1}$,
$L_2$ is a component of $K_2-r_{i_2}$ isomorphic to $T_{j_2}$,
$K$ contains the root of $L_1$,
$f(K)$ contains the root of $L_2$,
$K\cap L_1\simeq A_0$,
$f(K)\cap L_2\simeq A_0$,
$L_1-V(K)\simeq A_1$, and
$L_2-V(f(K))\simeq A_2$.
Note that
\begin{eqnarray}\label{e6}
t_{1,j} =\sum\limits_{x_\nu\in X_{(j,*,*,*,*)}}y_\nu
\,\,\,\,\,\mbox{ and }\,\,\,\,\,
t_{2,j} = \sum\limits_{x_\nu\in X_{(*,j,*,*,*)}}y_\nu
\,\,\,\,\,\mbox{ for every $j\in [p]$.}
\end{eqnarray}
Let 
$$Y_{(i_1,i_2)}$$ 
denote the set of all these $o$-tupels,
which depends only on $i_1,i_2\in [q]$. 
In principle, all $o$-tupels in $Y_{(i_1,i_2)}$ 
may be relevant for optimally solving 
the largest common subgraph problem for $F_1$ and $F_2$.
Their number though would be too large to obtain a polynomial running time.
Since we aim for an approximate solution only, 
we may restrict these options by quantization.
Therefore, let 
$\tilde{Y}_{(i_1,i_2)}$
be the set 
of all $o$-tupels $(y_1,\ldots,y_o)$ in $Y_{(i_1,i_2)}$
such that, for every $\nu\in [o]$ 
such that the first two entries of $x_\nu$ belong to $[p]$,
the value of $y_\nu$ is a multiple of 
$$\delta=\max\left\{1,
\left\lfloor\frac{\epsilon(d_{i_1}+d_{i_2})}{o\Delta}\right\rfloor
\right\}.$$
Note that, by (\ref{e6}), the $y_\nu$ with $x_\nu$ of the form 
$(j,\emptyset,\emptyset,\emptyset,\emptyset)$ 
or
$(\emptyset,j,\emptyset,\emptyset,\emptyset)$ for some $j\in [p]$,
are determined by the remaining $y_\nu$ and the $t_{\ell,j}$.
Since $y_1+\cdots +y_o\leq d_{i_1}+d_{i_2}$,
the step-size $\delta$ 
leaves at most $1+\frac{2o\Delta}{\epsilon}$ possible values for each $y_\nu$
with $\nu\in [o]$ such that the first two entries of $x_\nu$ belong to $[p]$
and, hence,
$$\left|\tilde{Y}_{(i_1,i_2)}\right|\leq 
o_{\max}:=\left(1+\frac{2o\Delta}{\epsilon}\right)^{o-2p},$$
which is finite for fixed $\epsilon$.

Enumerate the elements of this set as  
$$\tilde{Y}_{(i_1,i_2)}=
\left\{ y^1_{(i_1,i_2)},\ldots,y^{o_{(i_1,i_2)}}_{(i_1,i_2)}\right\}.$$
Restricting, for all $(i_1,i_2)\in [q]^2$,
to $\tilde{Y}_{(i_1,i_2)}$ instead of $Y_{(i_1,i_2)}$
for the approximate solution of the largest common subgraph problem 
for $F_1$ and $F_2$, deteriorates the achievable solutions by at most $2\epsilon n$.
In fact, if $\delta=1$ for $(i_1,i_2)\in [p]^2$, 
then $Y_{(i_1,i_2)}=\tilde{Y}_{(i_1,i_2)}$ 
and nothing changes.
If $\delta>1$ for $(i_1,i_2)\in [p]^2$,
then, 
for each $o$-tupel $(y_1,\ldots,y_o)$ in $Y_{(i_1,i_2)}\setminus \tilde{Y}_{(i_1,i_2)}$,
reducing each entry $y_\nu$
such that the first two entries $x_\nu$ belongs to $[p]$,
say $x_\nu=(j_1,j_2,A_0,A_1,A_2)$,
by less than $\delta$,
and increasing both entries for $(j_1,\emptyset,\emptyset,\emptyset,\emptyset)$ 
and $(\emptyset,j_2,\emptyset,\emptyset,\emptyset)$
by exactly the same amount,
yields an $o$-tupel in $\tilde{Y}_{(i_1,i_2)}$.
Since each tree in ${\cal T}$ has order at most $\Delta$,
using this new $o$-tupel instead of the old one,
reduces the number of edges from $K_1\simeq S_{i_1}$ 
in the solution by at most $\delta o \Delta\leq \epsilon(d_{i_1}+d_{i_2})$,
which is at most an $\epsilon$-fraction of 
the number of edges in $K_1\simeq S_{i_1}$ plus the  
the number of edges in $K_2\simeq S_{i_2}$;
this relative local error sums up to at most $2\epsilon n$.

Note that not all elements of $\tilde{Y}_{(i_1,i_2)}$ 
lead to a component in the solution 
that contains the roots of copies of $S_{i_1}$ (and $S_{i_2}$)
and has order at least $1+\frac{\Delta^2}{\epsilon}$,
but every such large component corresponds to an element of $\tilde{Y}_{(i_1,i_2)}$.

\medskip

\medskip

\medskip

\noindent {\bf Quantizing the options for all (possibly) large components}

Note that ${\cal T}$, ${\cal S}$, $X$, and $\tilde{Y}_{(i_1,i_2)}$ for $i_1,i_2\in [p]$
only depend on $\epsilon$ and $n$ 
but not on the specific clean forests $F_1$ and $F_2$.
Having understood and restricted the possible large components 
arising from a pair of components, one from $F_1$ and one from $F_2$,
we now consider $F_1$ and $F_2$ as a whole.

For $\ell\in [2]$ and $i\in [q]$, 
let $F_\ell$ contain $s_{\ell,i}$ components that are copies of $S_i$,
that is,
$$F_1\simeq\bigcup_{i=1}^q s_{1,i}S_i\,\,\,\,\,\mbox{ and }\,\,\,\,\,
F_2\simeq\bigcup_{i=1}^q s_{2,i}S_i.$$
Consider a nice common subgraph $F$ of $F_1$ and $F_2$
that is a spanning subgraph of the forest $F_f$ 
for some bijection $f:V(F_1)\to V(F_2)$,
such that, 
for every $(i_1,i_2)\in [q]^2$,
every component $K_1$ of $F_1$ isomorphic to $S_{i_1}$,
and 
every component $K_2$ of $F_2$ isomorphic to $S_{i_2}$
such that $f$ maps the root $r$ of $K_1$ to the root of $K_2$,
and $r$ is not isolated in $F$, 
the common subgraph $F$ is compatible on $K_1$ and $K_2$
with some element of $\tilde{Y}_{(i_1,i_2)}$.
Note that, in this case, since $r$ is not isolated,
(\ref{e7}) and the niceness of $F$ imply $|i_2-i_1|\leq c_2$.
For every $i_1\in [q]$,
every $i_2\in [q]$ with $|i_2-i_1|\leq c_2$, and
every $k\in \left[o_{(i_1,i_2)}\right]$,
let $s(i_1,i_2,k)$ be the number of components $K_1$ of $F_1$ isomorphic to $S_{i_1}$
whose root $r$ is mapped by $f$ to the root of some component $K_2$ of $F_2$ isomorphic to $S_{i_2}$ such that
$r$ is not isolated in $F$, and 
the component $K$ of $F$ that contains $r$ 
corresponds to the element $y^k_{(i_1,i_2)}$ of $\tilde{Y}_{(i_1,i_2)}$.
Note that $S_{i_1}$ has at most $d_{i_1}\Delta$ edges.
Reducing each value $s(i_1,i_2,k)$ by less than 
$$\delta'(i_1)=\max\left\{1,
\left\lfloor\frac{\epsilon s_{1,i_1}}{(2c_2+1)o_{\max}\Delta}\right\rfloor
\right\}$$
corresponds to isolating certain roots of components of $F_1$ within $F$
and deteriorates the corresponding overall solution by less than
\begin{eqnarray*}
\sum\limits_{i_1\in [q]}
\sum\limits_{i_2\in [q]:|i_2-i_1|\leq c_2}
\frac{\epsilon s_{1,i_1}}{(2c_2+1)o_{\max}\Delta}o_{(i_1,i_2)}d_{i_1}\Delta
&\leq& 
\sum\limits_{i_1\in [q]}
\sum\limits_{i_2\in [q]:|i_2-i_1|\leq c_2}
\frac{\epsilon s_{1,i_1}}{(2c_2+1)}d_{i_1}\\
&\leq &
\sum\limits_{i_1\in [q]}
\epsilon s_{1,i_1}d_{i_1}\\
&\leq &\epsilon m(F_1)\\
&\leq &\epsilon n. 
\end{eqnarray*}
Therefore, we may restrict ourselves,
for each $i_1\in [q]$, 
to $O\left(\frac{(2c_2+1)o_{\max}\Delta}{\epsilon}\right)$
different values for each $s(i_1,i_2,k)$, 
which, for fixed $\epsilon$, yields finitely many choices, 
say at most $c_3$, for 
$$M(i_1)=\Big(s(i_1,i_2,k)\Big)_{(i_2,k)\in [q]\times \left[o_{(i_1,i_2)}\right]
:|i_2-i_1|\leq c_2}.$$
Since $q\leq c_1\log(n)$, 
this results in at most 
$$c_3^{c_1\log(n)}=n^{c_1\log(c_3)}$$ 
many choices for
$$M=\Big(s(i_1,i_2,k)\Big)_{(i_1,i_2,k)\in [q]^2\times \left[o_{(i_1,i_2)}\right]
:|i_2-i_1|\leq c_2},$$
that is, polynomially many.
Note that such an $M$ is compatible with the instance $F_1$ and $F_2$ if
\begin{eqnarray*}
s_{1,i_1}-\sum\limits_{i_2\in [q]:|i_2-i_1|\leq c_2}\,\,\,\,\,\sum\limits_{k\in \left[o_{(i_1,i_2)}\right]}s(i_1,i_2,k) & \geq & 0\mbox{ for every $i_1\in [q]$, and}\\
s_{2,i_2}-\sum\limits_{i_1\in [q]:|i_2-i_1|\leq c_2}\,\,\,\,\,\sum\limits_{k\in \left[o_{(i_1,i_2)}\right]}s(i_1,i_2,k) & \geq & 0\mbox{ for every $i_2\in [q]$}.
\end{eqnarray*}
In fact, 
these two differences count the number of roots of components of 
$F_1$ and $F_2$, respectively, 
that are either of degree less than $\frac{\Delta}{\epsilon}$
or are of larger degree and correspond to an isolated vertex in the nice solution $F$.

See Figure \ref{fig4} for an illustration.

\begin{figure}[H]
    \centering
    \begin{tikzpicture} [scale=0.275]
    \tikzstyle{point}=[draw,circle,inner sep=0.cm, minimum size=1mm, fill=black]
    \tikzstyle{point2}=[draw,circle,inner sep=0.cm, minimum size=0.5mm, fill=black]

\def \n {6}
\def \radius {3cm}

\def \nn {7}
\def \radiuss {12cm}

\begin{scope}[shift={(0,-12)}]
\begin{scope}[yscale=-1,xscale=1]

\foreach \j in {0,1,2,3,4,5,6}{
\begin{scope}[shift={(8*\j,0)}]
\foreach \i in {1,...,\n}{
  \draw ({360/\n * (\i - 1)}:\radius) -- (0,0);
    \node[point] at  ({360/\n * (\i - 0.5)}:0.66*\radius) [label=below:] {};
\coordinate (x\i\j) at  ({360/\n * (\i - 0.5)}:0.66*\radius) [label=below:] {};
}
\draw (0,0) circle (3cm);
\end{scope}
}

\foreach \j in {0,1,2,3,4,5,6}{
\begin{scope}[shift={(8*\j,-12)}]       
\foreach \i in {1,...,\n}{
  \draw ({360/\n * (\i - 1)}:\radius) -- (0,0);
    \node[point] at  ({360/\n * (\i - 0.5)}:0.66*\radius) [label=below:] {};
    \coordinate (y\i\j) at  ({360/\n * (\i - 0.5)}:0.66*\radius) [label=below:] {};
}
\draw (0,0) circle (3cm);
\end{scope}
}

\end{scope}
\end{scope}

\foreach \j in {0,1,2,3,4,5,6}{

\foreach \i in {1}{
\pgfmathparse{\j+1}
\let\k\pgfmathresult
\foreach \l in {4}{
    \ifthenelse{\j<6}{\draw (y\i\j) -- (x\l\k)}{};
  }
}
\foreach \i in {2}{
\foreach \l in {5}{
    \draw (y\i\j) -- (x\l\j);
  }
}
\foreach \i in {3}{
\pgfmathparse{\j-1}
\let\k\pgfmathresult
\foreach \l in {6}{
    \ifthenelse{\j>0}{\draw (y\i\j) -- (x\l\k)}{};
  }
}

\foreach \i in {4}{
\pgfmathparse{\j-2}
\let\k\pgfmathresult
\foreach \l in {1}{
    \ifthenelse{\j>1}{\draw (y\i\j) -- (x\l\k)}{};
  }
}

\foreach \i in {6}{
\pgfmathparse{\j+2}
\let\k\pgfmathresult
\foreach \l in {3}{
    \ifthenelse{\j<5}{\draw (y\i\j) -- (x\l\k)}{};
  }
}
}

\foreach \j in {4}{

\foreach \i in {1}{
\pgfmathparse{\j+1}
\let\k\pgfmathresult
\foreach \l in {4}{
    \ifthenelse{\j<6}{\draw[line width=0.7mm] (y\i\j) -- (x\l\k)}{};
  }
}
\foreach \i in {2}{
\foreach \l in {5}{
    \draw[line width=0.7mm] (y\i\j) -- (x\l\j);
  }
}
\foreach \i in {3}{
\pgfmathparse{\j-1}
\let\k\pgfmathresult
\foreach \l in {6}{
    \ifthenelse{\j>0}{\draw[line width=0.7mm] (y\i\j) -- (x\l\k)}{};
  }
}

\foreach \i in {4}{
\pgfmathparse{\j-2}
\let\k\pgfmathresult
\foreach \l in {1}{
    \ifthenelse{\j>1}{\draw[line width=0.7mm] (y\i\j) -- (x\l\k)}{};
  }
}

\foreach \i in {6}{
\pgfmathparse{\j+2}
\let\k\pgfmathresult
\foreach \l in {3}{
    \ifthenelse{\j<5}{\draw[line width=0.7mm] (y\i\j) -- (x\l\k)}{};
  }
}
}

\draw[dashed] (12,7) -- (12,-21);

\begin{scope}[shift={(0,24)}]
\draw[->] (10,-19) -- (6,-19);
\draw[->] (14,-19) -- (18,-19);
\node at (8,-20+2) [label=above:] {$d_i< \frac{\Delta}{\epsilon}$};
\node at (16,-20+2) [label=above:] {$d_i\ge \frac{\Delta}{\epsilon}$};
\end{scope}
\draw[thick,black,decorate,decoration={brace,amplitude=10}] (11,-17) -- (-3,-17) node[midway, below,yshift=-10]{$n(S_i)< 1+\frac{\Delta^2}{\epsilon}$};

\node at (-5,0) [label=above:] {$F_1$};
\node at (-5,-12) [label=above:] {$F_2$};

\foreach \j in {4}{
\begin{scope}[shift={(8*\j,0)}]
\node[align=center] at (0,6) {$s_{1,i_1}$ many \\$S_{i_1}$};
\end{scope}
}

\foreach \j in {-1,0,+1}{
\begin{scope}[shift={(8*4+8*\j,0)}]
\ifthenelse{\j=0}{\node at (0,-17) [label=above:] {$S_{i_1}$}}{\node at (0,-17) [label=above:] {$S_{i_1 \j}$}};
\end{scope}
}

\foreach \j in {-2}{
\begin{scope}[shift={(8*4+8*\j,0)}]
\node at (0,-17) [label=above:] {$S_{i_1 -c_2}$};
\end{scope}
}

\foreach \j in {2}{
\begin{scope}[shift={(8*4+8*\j,0)}]
\node at (0,-17) [label=above:] {$S_{i_1 +c_2}$};
\end{scope}
}

\begin{scope}[shift={(-3.75,0)}]
\node[align=right] at (40,12) [label=right:\small roots of copies of $S_{i_1}$ from $F_1$]{};
\node[point, align=right] (a) at (41,10) [label=right:\small  coincide with roots of copies]{};
\node[align=right] at ($(a)+(-1+0.75,-2)$) [label=right:\small of $S_{i_1+c_2}$ from $F_2$ in $F$]{};
\node[point, align=right] (b) at (41,6) [label=right:\small are isolated vertices in $F$]{};
\end{scope}
\draw[dashed] (y54) -- (b);
\draw[dashed] ($(y64)+(1.7,-1.8)$) -- (a);

\foreach \j in {1,2,3}{
\node[point, align=right] (b) at (50+\j,-6) {};
}

\foreach \j in {1,2,3}{
\node[point, align=right] (b) at (-2-\j,-6) {};
}

\end{tikzpicture}

\caption{The figure illustrates part of the information encoded by 
$M=\Big(s(i_1,i_2,k)\Big)_{(i_1,i_2,k)}$ as a bipartite graph $G$. 
One partite set of $G$ --- shown in the upper half ---
corresponds to $F_1$ and contains a vertex for each component of $F_1$.
Similarly for the other partite set shown in the lower half corresponding to $F_2$.
In the upper partite set, 
there are $s_{1,i_1}$ vertices corresponding to copies of $S_{i_1}$.
The edges of $G$ encode 
which root vertices of components of $F_1$ can be mapped onto
which root vertices of components of $F_2$.
Since we aim for a nice common subgraph,
the edges of $G$ reflect $c_2=2$.}\label{fig4}
\end{figure}
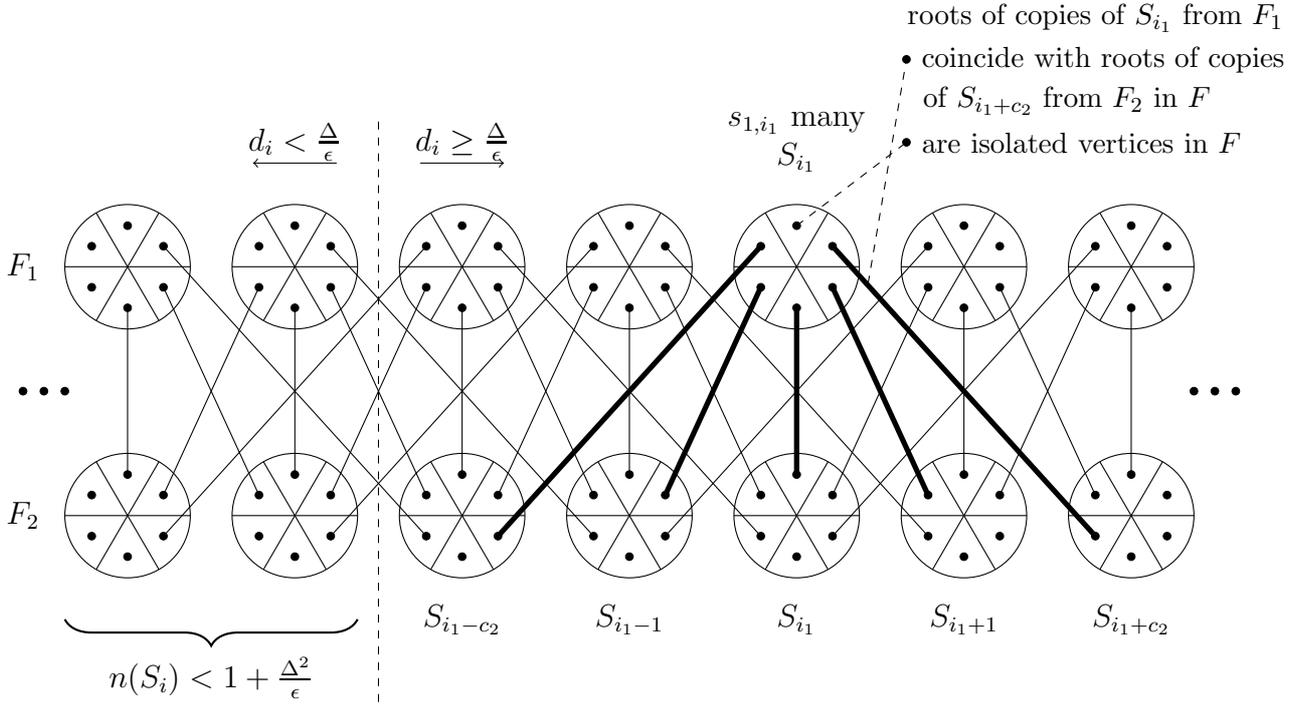

\medskip

\noindent {\bf Putting things together}

For each of the polynomially many compatible choices for $M$,  
\begin{itemize}
\item removing, 
for every $i_1,i_2\in [q]$ with $|i_2-i_1|\leq c_2$ and $k\in \left[o_{(i_1,i_2)}\right]$,
from $s(i_1,i_2,k)$ pairs $(K_1,K_2)$ of components 
$K_1\simeq S_{i_1}$ of $F_1$ 
and 
$K_2\simeq S_{i_2}$ of $F_2$
the parts $A_{0,1}$ and $A_{0,2}$ isomorphic to the corresponding subtrees $A_0$ 
as encoded by $y_{(i_1,i_2)}^k\in \tilde{Y}_{(i_1,i_2)}$, and
\item removing all edges incident with roots of degree at least $\frac{\Delta}{\epsilon}$ 
within the remaining components of $F_1$ and $F_2$,
\end{itemize}
results in subforests $F^M_1$ of $F_1$ and $F^M_2$ of $F_2$
whose components all have orders less than $1+\frac{\Delta^2}{\epsilon}$.
Let $F'$ denote the union of all parts isomorphic to the corresponding subtrees $A_0$ 
removed from $F_1$.
Using Lemma \ref{lemma3}, we can determine 
in polynomial time 
a common subgraph $F''$ of $F^M_1$ and $F^M_2$ 
with ${\rm lcs}(F^M_1,F^M_2)$ edges and $F''\subseteq F^M_1$.
Now, $F'\cup F''$ is a common subgraph of $F_1$ and $F_2$ 
that is compatible with $M$ and has the maximum possible number of edges 
subject to this condition. 
As explained along the proof, 
considering only the polynomially many choices for $M$
will produce a common subgraph $F^*$ of the form $F'\cup F''$
such that $m(F^*)\geq {\rm lcs}(F_1,F_2)-C\epsilon n$
for some fixed integer $C$ independent of $\epsilon$,
which completes the proof.
\end{proof}

\end{document}